\newcommand{\lightercolor}[3]{
    \colorlet{#3}{#1!#2!white}
}
 \definecolor{mygray}{gray}{0.8}
 \definecolor{mylightblue}{gray}{0.6}
\newcommand{\td}{\text d}
\DeclareMathOperator*{\argmax}{arg\,max}
\DeclareMathOperator*{\minimize}{\mathrm{minimize}}
\newcommand{\bfr}{\mathbf{R}}
\theoremstyle{remark}
\newtheorem{remark}{Remark}
\newenvironment{proofof}[1]{\begin{proof}[Proof of #1]}{\end{proof}}
\theoremstyle{plain}
\newtheorem{theorem}{Theorem}[section]
\newtheorem{lemma}[theorem]{Lemma}
\newtheorem{corollary}[theorem]{Corollary}
\newtheorem{claim}{Claim}
\theoremstyle{definition}
\newcommand{\fp}{\mathrm{FP}}
\newcommand{\rgrt}{\mathrm{RGRT}}
\newcommand{\dstr}{\mathrm{DSTR}}
\newcommand{\opt}{\mathrm{OPT}}
\newcommand{\cs}{\mathrm{CS}}
\newcommand{\cip}{\mathrm{CIP}}
\newcommand{\wcr}{\mathrm{WCR}}
\title{Robust Monopoly Regulation}
\author{Yingni Guo\thanks{Department of Economics, Northwestern University, yingni.guo@northwestern.edu.} \and Eran Shmaya\thanks{MEDS, Kellogg School of Management, Northwestern University, e-shmaya@kellogg.northwestern.edu.}}
\date{\today}
\begin{document}

\maketitle

\begin{abstract}

We study the regulation of a monopolistic firm using a robust-design approach. We solve for the policy that minimizes the regulator's worst-case regret, where the regret is the difference between his complete-information payoff and his realized payoff. When the regulator's payoff is consumers' surplus, it is optimal to impose a price cap. The optimal cap balances the benefit from more surplus for consumers and the loss from underproduction. When his payoff is consumers' surplus plus the firm's profit, he offers a piece-rate subsidy in order to mitigate underproduction, but caps the total subsidy so as not to incentivize severe overproduction.


\emph{JEL: D81, D82, D86}

\emph{Keywords: monopoly regulation, regret, non-Bayesian, price cap, piece-rate subsidy}	
\end{abstract}

\section{Introduction}
Regulating monopolies is challenging. A monopolistic firm has the market power to set its price above that in an oligopolistic or competitive market. For instance, \cite{Cooper2018} show that prices at monopoly hospitals are $12\%$ higher than those in markets with four or five rivals. In order to protect consumer well-being, a regulator may want to constrain the firm's price. However, a price-constrained firm may fail to obtain enough revenue to cover its fixed cost, so may end up not producing. The regulator must balance the need to protect consumer well-being and the need to not distort the production.

This challenge could be solved easily if the regulator had complete information about the industry. The regulator could ask the firm to produce at the socially optimal level and to set price equal to the marginal cost. He could then subsidize the firm for all of its other costs. However, the regulator typically has limited information about the consumer demand or the technological capacity of the firm. How shall the regulatory policy be designed when the regulator knows considerably less about the industry than the firm does? If the regulator demands robustness and wants a policy that works ``fairly well'' in all circumstances, what shall this policy look like?

We address this classic problem of monopoly regulation (e.g., \cite{BaronMyerson1982}) with a non-Bayesian approach. The regulator's payoff is a weighted sum of consumers' surplus and the firm's profit. He can regulate the firm's price and/or quantity. He can give a subsidy to the firm or charge a tax from it. Given a policy, the firm chooses its price and quantity to maximizes its profit. The \emph{regret} to the regulator is the difference between what he could have gotten if he had complete information about the industry and what he actually gets. The regulator evaluates a policy by its worst-case regret, i.e., the maximal regret he can incur across all possible demand and cost scenarios. The optimal policy minimizes the worst-case regret. 

The worst-case regret approach to uncertainty is our most significant difference from \cite{BaronMyerson1982} and the literature on monopoly regulation in general. \cite{BaronMyerson1982} take a Bayesian approach to uncertainty by assigning a prior to the regulator over the demand and cost scenarios and characterizing the policy that minimizes the expected regret. (Minimizing the expected regret is the same as maximizing the expected payoff, since the regulator's expected complete-information payoff is constant.) We instead focus on industries where information asymmetry is so pronounced that there is no obvious way to formulate a prior, or industries where new sources of uncertainty arise all the time. (See \cite{Hayek1945}, \cite{Weitzman1974} and \cite{Carroll2019}, for instance, for elaboration of these points.) In response, the regulator looks for a robust policy that works fairly well in all circumstances. 

To illustrate our solution, we begin with two extreme cases of the regulator's payoff. If the regulator puts no weight on the firm's profit, so his payoff is consumers' surplus, then it is optimal to impose a price cap. A price cap bounds how much consumers' surplus that the firm can extract. Consumers benefit from a lower price. However, a price cap might discourage a firm which should have produced from producing. Consumers lose in this case due to the firm's underproduction. The optimal level of the price cap balances consumers' gain from a lower price and their loss from the firm's underproduction. 

If the regulator puts the same weight on the firm's profit as he does on consumers' surplus, so his payoff is the total surplus of consumers and the firm, then the regulator simply wants the firm to produce as efficiently as possible. Given that an unregulated monopolistic firm tends to supply less than the efficient level, the regulator wants to encourage more production by subsidizing the firm. However, subsidy might incentivize production above the efficient level. The optimal design of subsidy must balance the loss from underproduction and that from overproduction.

The regulator will have a target price and a subsidy cap. For each unit that the firm sells, he subsidizes the firm for the difference between its price and the target price, subject to the constraint that the total subsidy doesn't exceed the subsidy cap. This piece-rate subsidy up to the target price effectively lifts the firm's selling price, motivating the firm to serve more than just those consumers with high values. On the other hand, the cap on the firm's total subsidy makes sure that the regulator doesn't lose too much from the potential overproduction. 


For intermediate payoffs, the regulator puts some weight on the firm's profit, but this weight is lower than the weight he puts on consumers' surplus. He must balance three goals simultaneously: giving more surplus to consumers, mitigating underproduction and mitigating overproduction. It is optimal to combine the price cap and the subsidy rule described above, leading to a regulatory policy with three distinctive features. First, the regulator will impose a price cap so the firm can't get more than the price cap per unit. Second, the firm gets a piece-rate subsidy instead of a lump-sum one. Third, the firm is subsidized up to the price cap subject to a cap on the total subsidy it will get. As the regulator puts more weight on the firm's profit, the level of this price cap increases.




Our contribution is threefold. First, we solve for an optimal regulatory policy. Second, our result explains why price cap regulation and piece-rate subsidy are common in practice. Third, we introduce the worst-case regret approach to the regulation problem. We advocate for this approach over the Bayesian approach for the two shortcomings of the Bayesian approach that are also emphasized in \cite{ArmstrongSappington2007}. First, since the relevant information asymmetries can be difficult to characterize precisely, it is not clear how to formulate a prior. Second, since multi-dimensional screening problems are difficult to solve, the form of optimal regulatory policies is generally not known.


\paragraph{Related literature.}

This paper contributes to the literature on monopoly regulation. \cite{Caillaud1988} and \cite{Braeutigam1989} provide an overview of earlier contributions in this field. \cite{ArmstrongSappington2007} discuss the recent developments. Our paper is closely related to \cite{BaronMyerson1982}. The most significant difference is our approach to uncertainty. \cite{BaronMyerson1982} take a Bayesian approach to uncertainty, and assume that there is a one-dimensional cost parameter that is unknown to the regulator. We take a non-Bayesian, worst-case regret approach, and assume that the regulator lacks information about both the demand and the cost functions. 


Our paper contributes to the growing literature of mechanism design with worst-case objectives. \cite{Carroll2019} provides a survey of recent theory in this field. Most of this literature assumes that the designer aims to maximize his worst-case payoff. We assume that the designer aims to minimize his worst-case regret. From this aspect, we are closely related to \cite{HurwiczShapiro1978}, \cite{BergemannSchlag2008, BergemannSchlag2011}, \cite{RenouSchlag2011}, and \cite{Bevia2019}. \cite{HurwiczShapiro1978} show that a 50-50 split is an optimal sharecropping contract when the optimality criterion involves the ratio of the designer's payoff to the first-best total surplus. \cite{BergemannSchlag2008, BergemannSchlag2011} examine robust monopoly pricing and argue that minimizing the worst-case regret is more relevant than maximizing the worst-case payoff, since the latter criterion suggests pricing to the lowest-value buyer. \cite{RenouSchlag2011} apply the solution concept of $\varepsilon$-minimax regret to the problem of implementing social choice correspondences. \cite{Bevia2019} characterize contests in which contestants have dominant strategies and find within this class the contest for which the designer's worst-case regret is minimized.

Minimizing the worst-case regret is more relevant a criterion in our setting as well for two reasons. First, the regret in our setting has a natural interpretation: it is the weighted sum of distortion in production and the firm's profit. Second, the regulator's worst-case payoff is zero or less under any policy, since consumers' values might be too low relative to the cost. In this case, there is no surplus even under complete information. When there is no surplus, there is nothing the regulator can do. We argue that, instead, the regulator's goal should be to protect surplus in situations where there is some surplus to protect. The notion of regret catches this idea.


The worst-case regret approach goes back at least to \cite{Savage1954}. Under this approach, when a decision maker has to choose some action facing uncertainty, he chooses the action that minimizes the worst-case regret across all possible realizations of the uncertainty. The regret is defined as the difference between what the decision maker could achieve given the realization, and what he achieves under this action. In our case, the regulator has uncertainty about the demand and cost functions and he has to choose a policy. Savage also puts forward an interpretation of the worst-case regret approach in the context of group decision, which is relevant for our policy design context.  Consider a group of people who must jointly choose a policy. They have the same payoffs but different probability judgements. 
Under  the policy that minimizes the worst-case regret no member of the group faces a large regret, so no member will feel that the suggestion is a serious mistake.
Seminal game theory papers in which players try to minimize worst-case regret include \cite{Hannan1957} and \cite{HartMasColell2000}. Minimizing worst-case regret is also the leading approach in online learning, and in particular in multi-armed bandit problems (see \cite{bandit2012} for a survey). 


Our work also contributes to the delegation literature (e.g., \cite{Holmstrom1977,Holmstrom1984}). When the regulator cares only about consumers' surplus, it is optimal to simply impose a price cap. To our knowledge, we are the first to show that a delegation contract --- a contract that doesn't use money --- is optimal in a contracting environment where both parties can transfer money to each other. 

\section{Environment}
There is a monopolistic firm and a mass one of consumers. Let $V:[0,1]\to [0,\bar v]$ be a decreasing upper-semicontinuous \emph{inverse-demand function}. A quantity-price pair $(q,p)\in [0,1]\times [0,\bar v]$ is \emph{feasible} if and only if it is below the inverse-demand function, i.e., $p\leqslant V(q)$. The total value to consumers of quantity $q$ is the area under the inverse-demand function, given by $\int_0^q V(z) ~\td z$.

Let $C:[0,1]\to \bfr_+$ with $C(0)=0$  be an increasing lower-semicontinuous \emph{cost function}. 
The \emph{social optimum} is given by: \begin{equation}\label{opt}\opt = \max_{q \in [0,1]} \left( \int_0^q V(z) ~\td z - C(q)\right).\end{equation} 
If the firm produces $q$ units, then the \emph{(market) distortion} is given by: \begin{equation}\label{dstr}\dstr= \opt - \left(\int_0^q V(z)~\td z - C(q)\right).\end{equation}
To simplify notation, we omitted the dependence of $\opt$ on $V,C$ and the dependence of $\dstr$ on $V,C$ and $q$. We will do the same for some other terms in this section when no confusion arises.


\subsubsection*{Regulatory policies} A \emph{policy} is given by an upper-semicontinuous function $\rho:[0,1]\times [0,\bar v]\to \bfr$. If the firm sells $q$ units at price $p$, then it receives revenue $\rho(q,p)$. The firm's revenue from the policy, $\rho(q,p)$, includes the revenue $qp$ from the marketplace, and any tax or subsidy, $\rho(q,p)-qp$, imposed by the regulator. We give three examples of regulatory policies: 
\begin{enumerate}
	\item A regulator who decides not to intervene will choose $\rho(q,p)=qp$, so the firm's revenue $\rho(q,p)$ equals its revenue from the marketplace. 
	\item The regulator can give the firm a lump-sum subsidy $s$ if it sells more than a certain quantity $\tilde q$. The policy is $\rho(q,p)=qp$ if $q < \tilde q$ and $\rho(q,p)=qp+s$ if $q \geqslant \tilde q$.
	\item The regulator can require that the firm get no more than $k$ per unit. The policy is $\rho(q,p)=\min(qp,qk)$. If the firm prices above $k$, it pays a tax of $q(p-k)$ to the regulator. 
\end{enumerate}

Fix a policy $\rho$, an inverse-demand function $V$ and a cost function $C$. If the firm produces $q$ units at price $p$, then \emph{consumers' surplus} and the \emph{firm's profit} are given by:
\begin{equation}\label{fpcs}
\cs= \int_0^q V(z)~\td z-\rho(q,p) , \text{ and }\fp = \rho(q,p) - C(q).\end{equation}
The definition of consumers' surplus incorporates the fact that any subsidy to the firm is paid by consumers through their taxes and that any tax from the firm goes to the consumers. We also assume that $\rho(0,0) \geqslant 0$, so the firm is allowed to stay out of business without suffering a negative profit. This is the participation constraint.

We say that $(q,p)$ is a \emph{firm's best response to $(V,C)$ under the policy $\rho$} if it maximizes the firm's profit over all feasible $(q,p)$. The firm might have multiple best responses. The participation constraint implies that $\fp\geqslant 0$ for every best response $(q,p)$ of the firm.

The regulator's payoff is a weighted sum, $\cs+\alpha \fp$, of consumers' surplus and the firm's profit for some fixed parameter $\alpha \in [0,1]$.
\subsubsection*{Complete information}
Fix an inverse-demand function $V$ and a cost function $C$. We let $\cip$ denote the regulator's complete-information payoff, which is  what the regulator would achieve if he could tailor the policy for these inverse-demand and cost functions. Formally, 
\begin{equation}\label{cip}
\cip=\max_{\rho, q, p} \left( \cs+\alpha \fp \right),
\end{equation}
where the maximum ranges over all policies $\rho$ and all firm's best responses $(q,p)$ to $(V,C)$ under $\rho$.

The following claim shows that the regulator's complete-information payoff is the social optimum. He would ask the firm to produce the socially optimal quantity and give the firm a revenue equal to its cost. As a result, the maximum surplus is generated, all of which goes to consumers. 
\begin{claim}\label{cl:cip}Fix an inverse-demand function $V$ and a cost function $C$. Then $\cip=\opt$.
\end{claim}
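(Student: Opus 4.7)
The plan is to prove $\cip=\opt$ by sandwiching it between two bounds: an easy upper bound from the participation constraint, and a matching lower bound from an explicit construction.

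For the upper bound, I would first combine the two definitions in \eqref{fpcs} to get $\cs+\fp = \int_0^q V(z)\,\td z - C(q)$ for any policy $\rho$ and any firm's best response $(q,p)$, so that
\[ \cs+\al\fp \;=\; \int_0^q V(z)\,\td z - C(q) - (1-\al)\fp. \]
Since $\al\leqslant 1$ and the participation constraint gives $\fp\geqslant 0$, this is bounded above by $\int_0^q V(z)\,\td z - C(q)\leqslant\opt$, by the definition of $\opt$. Maximizing over $\rho$ and best responses then yields $\cip\leqslant\opt$.

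For the reverse inequality I would exhibit a single policy that attains $\opt$. Pick $q^\ast\in[0,1]$ maximizing $\int_0^q V(z)\,\td z - C(q)$; such a maximizer exists because the objective is upper-semicontinuous on the compact interval $[0,1]$, being the sum of the continuous function $q\mapsto\int_0^q V(z)\,\td z$ and the upper-semicontinuous function $-C$. Then I would define
\[ \rho(q,p) \;=\; C(q^\ast)\,\mathbf{1}\{q\geqslant q^\ast\}. \]
This $\rho$ is upper-semicontinuous because each of its superlevel sets is either all of $[0,1]\times[0,\bar v]$, the closed strip $\{q\geqslant q^\ast\}\times[0,\bar v]$, or empty; and $\rho(0,0)=0$ (trivially when $q^\ast>0$, and because $C(0)=0$ when $q^\ast=0$), so the participation constraint is met. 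Under this $\rho$ the firm's profit equals $-C(q)\leqslant 0$ for $q<q^\ast$ and $C(q^\ast)-C(q)\leqslant 0$ for $q\geqslant q^\ast$ (using that $C$ is increasing), so the maximal profit is $0$. The feasible pair $(q^\ast,0)$ --- feasible because $V(q^\ast)\geqslant 0$ --- attains profit $0$ and is therefore a best response. Evaluating the regulator's payoff at it gives $\cs+\al\fp = \int_0^{q^\ast}V(z)\,\td z - C(q^\ast) = \opt$, proving $\cip\geqslant\opt$.

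The only mildly delicate point is verifying that the constructed $\rho$ is upper-semicontinuous, so that it qualifies as a policy in the sense of the formal definition; the rest is straightforward algebra together with the monotonicity of $C$ and the participation constraint.
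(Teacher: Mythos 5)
Your proof is correct and follows essentially the same route as the paper: the upper bound $\cip\leqslant\opt$ via $\al\leqslant 1$ and the participation constraint, and the lower bound via an explicit policy that reimburses the firm exactly $C(q^\ast)$ so that producing the socially optimal quantity is a (zero-profit) best response. The only difference is cosmetic --- the paper pays $C(q^\ast)$ only at the single point $(q^\ast,V(q^\ast))$ whereas you pay it on the whole set $\{q\geqslant q^\ast\}$ --- and both choices are upper-semicontinuous and yield the same conclusion.
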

\begin{proof}
First, the regulator's complete-information payoff is at most $\opt$. Indeed, 
$$
\cs+\alpha \fp \leqslant \cs+\fp \leqslant \opt,
$$	
for every policy $\rho$ and every best response $(q,p)$ to $(V,C)$ under $\rho$. Here the first inequality follows from $\alpha\leqslant 1$ and the participation constraint $\fp\geqslant 0$, and the second from the definitions of $\cs,\fp,\opt$ in \eqref{fpcs} and \eqref{opt}.

Second, let $q^\ast$ denote a quantity that achieves the social optimum. The regulator can achieve $\opt$ by setting
\[\rho(q,p)=\begin{cases}C(q^\ast), &\text{ if }(q,p)=(q^\ast, V(q^\ast)), \\0, &\text{ otherwise}.\end{cases}\]Choosing $(q,p)=(q^\ast, V(q^\ast))$ is a firm's best response to $(V,C)$ under $\rho$ which gives $\cs=\opt$ and $\fp=0$, so $\cs+\alpha \fp=\opt$. 
\end{proof}
\subsubsection*{Regret}
When the regulator does not know $(V,C)$, the policy will usually not give the regulator his complete-information payoff. The regulator's \emph{regret} is the difference between what he could have gotten under complete information and what he actually gets. The following step allows us to express regret in terms of distortion and the firm's profit:
\begin{align*}
\rgrt & =  \cip-\left( \cs +\alpha \fp \right) = \opt-\left( \cs +\alpha \fp \right)  \\
& = \opt-\left( \cs + \fp \right)+ (1-\alpha) \fp  \\
&  = \dstr+ (1-\alpha) \fp.
     \end{align*}
Here, the first equality follows from the definition of regret, the second from Claim \ref{cl:cip} that $\cip =\opt$, and the rest is algebra. 

Regret has a natural interpretation in our setting. $\dstr$ represents the loss in efficiency, since the regulator wishes the firm to produce as efficiently as possible. $(1-\alpha)\fp$  represents the loss in his redistribution objective, since the regulator wishes that more surplus goes to consumers rather than to the firm.

\subsubsection*{The regulator's problem}
We look for the policy that minimizes the worst-case regret. Thus \emph{the regulator's problem} is
\[\minimize_\rho \; \max_{V,C,q,p} \; \rgrt\]
where the minimization is over all policies $\rho$, and the maximum ranges over all $(V,C)$ and all the firm's best responses $(q,p)$ to $(V,C)$ under $\rho$. 

Formulating the regulator's problem as a minimax problem is our only departure from the literature on monopoly regulation. If we assigned a Bayesian prior to the regulator over the demand and cost scenarios, minimizing the expected regret would be the same as maximizing the expected payoff as in \cite{BaronMyerson1982}. We instead consider environments where the regulator knows only the range of consumers' values, which is much easier to figure out than formulating a prior.   

\begin{remark}
In the definition of $\cip$ we assumed that the firm breaks ties in favor of the regulator, whereas in the definition of the regulator's problem we assumed that the firm breaks ties against the regulator. These assumptions are for convenience only and do not affect the value of $\cip$ in Claim~\ref{cl:cip} or the solution to the regulator's problem in Theorems~\ref{th:lower} to \ref{th:properties}.\footnote{If in the definition of $\cip$ we assumed that
  the firm breaks ties against the regulator, we would define $$ \cip = \sup_\rho \min_{q,p}  (\cs + \alpha \fp ),$$
  where the minimum ranges over all firm's best responses $(q,p)$ to $(V,C)$ under $\rho$. Then the supremum may not be achieved, but the value of $\cip$ would be the same. Similarly, if we assumed that the firm breaks ties in favor of the regulator in the regulator's problem then the ``worst case'' pair $(V,C)$ may not exist, but the solution to the regulator's problem would remain the same.} \end{remark}
\section{Main result}


We first provide a lower bound on the worst-case regret of any policy. We then show that our policy indeed achieves this lower bound, so it is optimal. Both the lower-bound and upper-bound discussions  center on the tradeoff between giving more surplus to consumers, mitigating underproduction, and mitigating overproduction.

Suppose that the regulator imposes a price cap $k$. A price cap advances the regulator's redistribution objective by bounding how much consumers' surplus the firm can extract, but it may worsen the problem of underproduction. There is a price cap level that balances these two forces. Explicitly, consider a market in which every consumer has the highest value $\bar v$. If the cost is zero, the firm will price at $k$ and serve all consumers. There is no distortion since all consumers are served, as it should be, but the firm's profit is $k$. The regret is $(1-\alpha)k$. The lower $k$ is, the lower the regret is. On the other hand, if the firm has a fixed cost of $k$, it is a firm's best response not to produce. The firm's profit is zero, but the distortion is $\bar v -k$, which is the surplus that could have been made. The regret equals this distortion. The lower $k$ is, the higher the regret is. We let $k_\alpha$ be the price cap such that these two levels of regret are equalized, so $k_\alpha=\bar v/(2-\alpha)$ as depicted in the left panel of Figure \ref{fg:lower}. 


%
%

With this $k_\alpha$ balancing the tradeoff between giving more surplus to consumers and mitigating underproduction, we are ready to establish a lower bound on the worst-case regret. 

\begin{theorem}[Lower bound on worst-case regret] \label{th:lower}
Let
  \begin{equation}
  \label{eq:regretlower}
r_\alpha  = \max_{q\in [0,1], \;   p\in [0,k_\alpha]} \min\left( (1-\alpha)q k_\alpha - qp\log q,  q(k_\alpha-p)\right).
\end{equation}
Then the worst-case regret under any policy is at least $r_\alpha$.
\end{theorem}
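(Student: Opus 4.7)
The plan is to fix an arbitrary policy $\rho$ and show that, for each $(q,p)\in [0,1]\times[0,k_\alpha]$, there exist adversarial primitives $(V,C)$ under which the firm's best response yields regret at least $\min\!\left(A(q,p),\,B(q,p)\right)$, where $A(q,p)=(1-\alpha)qk_\alpha-qp\log q$ and $B(q,p)=q(k_\alpha-p)$. Optimizing the resulting pointwise bound over $(q,p)$ then delivers $r_\alpha$.

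I would use two candidate environments, each tailored to one piece of the regret decomposition $\rgrt=\dstr+(1-\alpha)\fp$. The ``$B$-environment'' is the step demand $V_1(z)=k_\alpha\cdot\mathbf{1}[z\leqslant q]$ paired with the linear cost $C_1(z)=pz$; its complete-information payoff equals exactly $B(q,p)$. If the policy is sufficiently restrictive --- formally, $\rho(q',p')\leqslant pq'$ for every $(q',p')$ feasible under $V_1$ --- the firm stays out of the market. Then $\fp=0$ and $\dstr=B(q,p)$, so the regret equals $B(q,p)$. The ``$A$-environment'' uses the hyperbolic demand $V_2(z)=\min(k_\alpha,\,pq/z)\cdot\mathbf{1}[z\leqslant q]$ with zero cost. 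Its complete-information payoff integrates to $pq(1+\log(k_\alpha/p))$, and every feasible point has marketplace revenue $q'p'\leqslant pq$. For any firm response $(q^*,p^*)$ located on the hyperbolic portion, the distortion evaluates to $pq\log(q/q^*)$ --- the source of the $\log q$ term in $A(q,p)$.

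The proof then performs a dichotomy on the firm's behavior. If the firm stays out in the $B$-environment we are done. Otherwise there is a feasible witness $(q_1,p_1)$ under $V_1$ with $\rho(q_1,p_1)>pq_1$. Transplanting this witness into the $A$-environment and invoking the firm's revenue-maximizing best response then forces the regret in the $A$-environment to be at least $A(q,p)$, via an algebraic trade-off between the $(1-\alpha)\rho$ profit contribution and the logarithmic distortion term $pq\log(q/q^*)$. Since in each case at least one of the two environments produces regret $\geqslant\min(A,B)$, the worst-case regret of $\rho$ is bounded below by $\min(A(q,p),B(q,p))$ for this choice of $(q,p)$; taking the maximum over $(q,p)$ gives $r_\alpha$.

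The main obstacle is the precise accounting in the $A$-environment: one must show that, irrespective of where the firm locates on the hyperbolic feasibility curve, $(1-\alpha)\rho(q^*,p^*)+pq\log(q/q^*)\geqslant A(q,p)$. This will exploit the firm's revealed preference over points on the hyperbolic boundary (each of which yields marketplace revenue exactly $pq$), combined with the witness produced by the dichotomy. The shape of $V_2$ is chosen precisely so that $\int_{pq/k_\alpha}^{q}(pq/z)\,\mathrm{d}z=pq\log(k_\alpha/p)$, generating exactly the logarithmic trade-off that pins down $A(q,p)$ as the tight bound from this family of adversarial environments.
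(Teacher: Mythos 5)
Your high-level architecture --- a dichotomy between ``the policy is too stingy, so punish it with underproduction'' and ``the policy is too generous, so punish it with overproduction,'' evaluated pointwise in $(q,p)$ and then maximized --- is exactly the paper's (its Claims~\ref{cl:lb-over}--\ref{cl:lb-withunderq}). But your adversarial environments are constructed incorrectly, and the $A$-side bound cannot be recovered from them. The paper's underproduction demand is $V_{q,p}(z)=\bar v$ for $z\leqslant q$ and $qp/z$ for $z>q$: the flat part sits at the \emph{top} of the value range, and the hyperbola extends \emph{beyond} $q$, so that $\int_q^1 qp/z\,\td z=-qp\log q$ is precisely the surplus lost by stopping at $q$, and the term $(1-\alpha)qk_\alpha$ arises either as $(1-\alpha)\fp$ when the policy pays at least $qk_\alpha$ on $[0,q]$, or as the foregone surplus $q(\bar v-k_\alpha)=(1-\alpha)qk_\alpha$ when a fixed cost of $qk_\alpha$ deters production (the identity $\bar v-k_\alpha=(1-\alpha)k_\alpha$ is the whole point of the choice $k_\alpha=\bar v/(2-\alpha)$). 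Your $V_2(z)=\min(k_\alpha,pq/z)\mathbf{1}[z\leqslant q]$ caps values at $k_\alpha$ rather than $\bar v$ and puts the hyperbola \emph{inside} $[0,q]$, so its total surplus is only $pq(1+\log(k_\alpha/p))$, which bears no relation to $A(q,p)=(1-\alpha)qk_\alpha-qp\log q$. The failure is stark at $p=0$: then $V_2\equiv 0$, $\opt=0$, and the regret in your $A$-environment is at most $(1-\alpha)\max_{q'}\rho(q',0)$, which nothing forces to be near $A(q,0)=(1-\alpha)qk_\alpha>0$; yet $(q,p)=(1,0)$ is exactly the maximizer of \eqref{eq:regretlower} when $\alpha\leqslant 1/2$, so your argument misses the binding case.

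The dichotomy pivot is also too weak. Your witness from the $B$-environment only certifies $\rho(q_1,p_1)>pq_1$, i.e., a strictly positive subsidy relative to the linear-cost benchmark, which can be arbitrarily small and need not even be feasible under $V_2$ (feasibility there requires $q_1p_1\leqslant pq$, which the witness does not satisfy in general). The correct pivot is quantitative: either some point with $q'\geqslant q$ and marketplace revenue $q'p'\leqslant qp$ receives total revenue $\rho(q',p')\geqslant qk_\alpha$ --- in which case the flat demand $W_{q',p'}$ together with a fixed cost equal to $\rho(q',p')$ makes the firm's production pure waste and yields regret equal to the subsidy $\rho(q',p')-q'p'\geqslant q(k_\alpha-p)=B(q,p)$ --- or no such point exists, in which case the firm facing $V_{q,p}$ will not produce beyond $q$, and one of the two cost scenarios above (zero cost, or fixed cost $qk_\alpha$) delivers $A(q,p)$. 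Note also that your $B$-environment by itself does not yield $B(q,p)$ when the firm enters but produces close to $q$ with a tiny profit. To repair the proof you should replace $V_2$ with the paper's $V_{q,p}$, add the fixed-cost variant, and restate the dichotomy in terms of whether $\max_{q'\geqslant q,\,q'p'\leqslant qp}\rho(q',p')$ exceeds $qk_\alpha$.
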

For any $(q,p)$, we argue that the worst-case regret is at least  the minimum of two terms. Roughly speaking, the first term, $(1-\alpha)q k_\alpha - qp\log q $, is the possible regret from underproduction if the revenue to the firm is too low. The second term, $q(k_\alpha-p)$, is the possible regret from overproduction if the revenue is too high. No matter how the policy is designed, the regulator has to suffer from one of these two. Since the worst-case regret is at least the minimum of these two for every $(q,p)$, we can take the maximum over $q\in[0,1]$ and $p \in [0, k_\alpha]	$.

Let $q_\alpha$ achieve the maximum in the definition of $r_\alpha$ in~\eqref{eq:regretlower}. When $\alpha\leqslant 1/2$, $q_\alpha$ equals one. When $\alpha>1/2$, $q_\alpha$ is interior. The explicit values of $r_\alpha$ and $q_\alpha$ are given by:
\begin{equation*}
\label{eq:lowerbound}
r_\alpha=\bar v \begin{cases}
 \frac{1-\alpha}{2-\alpha} &  \alpha \leqslant \frac{1}{2} \\
 \frac{\left(2+\alpha -\sqrt{\alpha  (\alpha +4)}\right) e^{1-\frac{\alpha+\sqrt{\alpha  (\alpha +4)}}{2} }}{2
   (2-\alpha)} &  \alpha > \frac{1}{2}.
\end{cases},\quad \quad q_\alpha=\begin{cases}1,&\text{  if }\alpha\leqslant 1/2\\ e^{1-\frac{\alpha+\sqrt{\alpha  (\alpha +4)}}{2} }, & \text{ if }\alpha>1/2.
\end{cases}
	\end{equation*}
The middle and right panels of Figure \ref{fg:lower} depict the values of $r_\alpha$ and $q_\alpha$. 
\begin{figure}[htb!]
\begin{center}
\psset{xunit=3,yunit=3}
\begin{pspicture}(-.08,-.1)(5.2,1.3)

\psline[arrows=->](0,0)(1.1,0)
\psline[arrows=->](0,0)(0,1.1)
\rput[c](-.05,-.1){$0$}
\rput[c](1,-.1){$1$}
\rput[c](-.08,1){$\bar v$}
\rput[c](-.08,.5){$\frac{\bar v}{2}$}
\rput[c](1.18,0){$\alpha$}
\rput[c](0.5,0.786667){$k_\alpha$}
\psline[linewidth=1.5pt,linecolor=black](0., 0.5)(0.05, 0.512821)(0.1, 0.526316)(0.15, 0.540541)(0.2,0.555556)(0.25, 0.571429)(0.3,0.588235)(0.35,0.606061)(0.4, 0.625)(0.45, 0.645161)(0.5,0.666667)(0.55,0.689655)(0.6, 0.714286)(0.65, 0.740741)(0.7, 0.769231)(0.75, 0.8)(0.8, 0.833333)(0.85, 0.869565)(0.9, 0.909091)(0.95, 0.952381)(1., 1.)
\psline[linestyle=dotted](1,0)(1,1)(0,1)

\psline[arrows=->](2,0)(3.1,0)
\psline[arrows=->](2,0)(2,1.1)
\rput[c](1.95,-.1){$0$}
\rput[c](3,-.1){$1$}
\rput[c](1.92,1){$\bar v$}
\rput[c](3.18,0){$\alpha$}
\rput[c](1.92,.5){$\frac{\bar v}{2}$}
\psline[linestyle=dotted](3,0)(3,1)(2,1)
\rput[c](2.5, 0.413333){$r_\alpha$}

\psline[linewidth=1.5pt,linecolor=black](2., 0.5)(2.05, 0.487179)(2.1, 0.473684)(2.15, 0.459459)(2.2,
   0.444444)(2.25, 0.428571)(2.3, 0.411765)(2.35, 
  0.393939)(2.4, 0.375)(2.45, 0.354839)(2.5, 0.333333)(2.55, 
  0.312508)(2.6, 0.294179)(2.65, 0.277976)(2.7, 
  0.263611)(2.75, 0.250852)(2.8, 0.239517)(2.85, 
  0.229457)(2.9, 0.220557)(2.95, 0.212721)(3., 0.205881)

\psline[arrows=->](4,0)(5.1,0)
\psline[arrows=->](4,0)(4,1.1)
\rput[c](3.95,-.1){$0$}
\rput[c](5,-.1){$1$}
\rput[c](3.92,1){$1$}
\rput[c](5.18,0){$\alpha$}
\psline[linestyle=dotted](5,0)(5,1)(4,1)
\rput[c](4.75, 0.847072){$q_\alpha$}
\rput[c](4.5,-.1){$1/2$}
\psline[linestyle=dotted](4.5,0)(4.5,1)

\psline[linewidth=1.5pt,linecolor=black](4., 1.)(4.05, 1.)(4.1, 1.)(4.15, 1.)(4.2, 1.)(4.25, 
  1.)(4.3, 1.)(4.35, 1.)(4.4, 1.)(4.45, 1.)(4.5, 1.)(4.55,
   0.936164)(4.6, 0.877514)(4.65, 0.823438)(4.7, 
  0.773432)(4.75, 0.727072)(4.8, 0.684001)(4.85, 0.64391)(4.9,
   0.606531)(4.95, 0.571631)(5., 0.539003)

\psline[linewidth=1.5pt,linecolor=black,linestyle=dashed](2., 0.)(2.05, 0.025641)(2.1, 0.0526316)(2.15, 
  0.0810811)(2.2, 0.111111)(2.25, 0.142857)(2.3, 
  0.176471)(2.35, 0.212121)(2.4, 0.25)(2.45, 0.290323)(2.5, 
  0.333333)

\psline[linewidth=1.5pt,linecolor=black,linestyle=dashed](2.5, 
  0.313333)(2.55, 
  0.292508)(2.6, 0.274179)(2.65, 0.257976)(2.7, 
  0.243611)(2.75, 0.230852)(2.8, 0.219517)(2.85, 
  0.209457)(2.9, 0.200557)(2.95, 0.192721)(3., 0.185881)
  
  \rput[c](2.3, 0.072857){$s_\alpha$}

\end{pspicture}
\end{center}
\caption{Values of $k_\alpha$, $r_\alpha$, $q_\alpha$, and $s_\alpha$}
\label{fg:lower}
\end{figure}
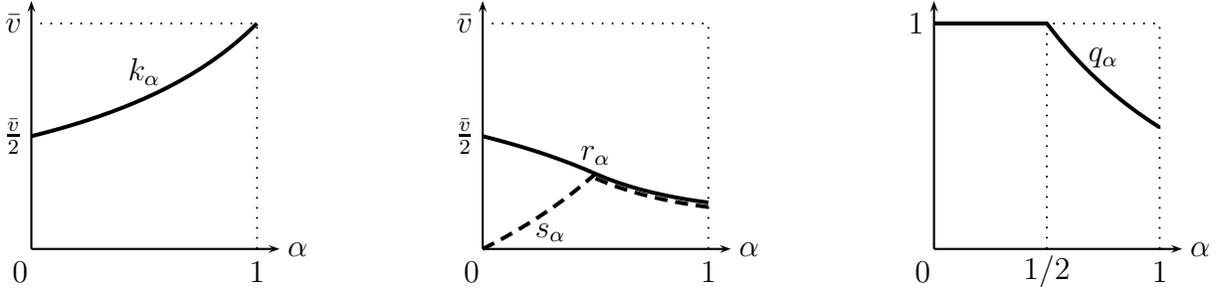

\begin{theorem}[Optimal policy] \label{th:upper}
Let 
$$
s_\alpha=\sup\{q(k_\alpha - p): q\in [0,1], p\in [0,k_\alpha], (1-\alpha)q k_\alpha - qp\log q> r_\alpha\}.
$$
The policy 
\begin{equation}
\label{eq:optimalform}
  \rho(q,p)=\min( q k_\alpha , qp  + s )
  \end{equation}
with $s_\alpha\leqslant s\leqslant r_\alpha$ achieves the worst-case regret $r_\alpha$.
\end{theorem}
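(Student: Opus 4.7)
The plan is to fix arbitrary $(V,C)$ and a firm's best response $(q^\ast,p^\ast)$ under $\rho(q,p)=\min(qk_\alpha,qp+s)$, and to show $\rgrt\leq r_\alpha$. Since $\rho(q,\cdot)$ is nondecreasing in $p$, I may take $p^\ast=V(q^\ast)$. Writing $R(q):=\rho(q,V(q))=\min(qk_\alpha,qV(q)+s)$, the firm's problem reduces to $q^\ast\in\argmax_q\bigl(R(q)-C(q)\bigr)$, and the best-response condition yields the constraint $C(q')\geq R(q')-\fp$ for every $q'\in[0,1]$, where $\fp=R(q^\ast)-C(q^\ast)\geq 0$.

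Applying this constraint at a socially optimal $q^o$ and plugging into the identity $\rgrt=\opt-\int_0^{q^\ast}V(z)\diff z+C(q^\ast)+(1-\alpha)\fp$ gives, in the subcase $q^o\geq q^\ast$,
\[
\rgrt\leq \int_{q^\ast}^{q^o}V(z)\diff z - R(q^o)+R(q^\ast)+(1-\alpha)\fp,
\]
and an analogous expression when $q^o<q^\ast$. The next step is to split according to which arm of the $\min$ binds in each of $R(q^o)$ and $R(q^\ast)$, and to check that in each of the resulting subcases the right-hand side collapses to one of two canonical forms: an \emph{underproduction} bound of the form $(1-\alpha)qk_\alpha-qp\log q$ at some $(q,p)\in[0,1]\times[0,k_\alpha]$, which is $\leq r_\alpha$ directly by the definition~\eqref{eq:regretlower}; or an \emph{overproduction} bound essentially of the form $q(k_\alpha-p)$ plus at most one multiple of $s$, which is $\leq r_\alpha$ via $s\leq r_\alpha$ and again~\eqref{eq:regretlower}.

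The constraint $s\geq s_\alpha$ enters precisely to rule out the ``bad'' best-response configurations. At any $(q,p)\in[0,1]\times[0,k_\alpha]$ for which $(1-\alpha)qk_\alpha-qp\log q>r_\alpha$, the definition of $s_\alpha$ forces $q(k_\alpha-p)\leq s_\alpha\leq s$, which means the subsidy on offer fully covers the gap between the firm's market revenue $qp$ and the target $qk_\alpha$. The firm then strictly prefers to deviate toward a larger quantity, so such $(q,p)$ can never arise as a best response, and the underproduction bound we need to control is automatically $\leq r_\alpha$.

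The main analytic obstacle, I expect, is the derivation of the $-qp\log q$ term. For $q'\in[q^\ast,q^o]$, the best-response inequalities combined with monotonicity of $V$ force $q'V(q')$ to stay bounded (up to $s$), which translates into a $1/q'$-type decay of $V$ and turns $\int_{q^\ast}^{q^o}V(z)\diff z$ into an expression involving $\log(q^o/q^\ast)$; tracking this estimate across the four subcases is the crux of the argument. The boundary case $q^\ast=0$ (the firm declines to produce), where $\rgrt=\opt$, must be treated separately: one bounds $\opt$ directly using $R(q')\leq C(q')$ for every $q'>0$, and optimizing the adversary's $V$ against this one-sided constraint is what gives rise to the exponential factor $e^{1-(\alpha+\sqrt{\alpha(\alpha+4)})/2}$ in the $\alpha>1/2$ formula for $r_\alpha$.
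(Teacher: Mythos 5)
Your overall plan coincides with the paper's: reduce to $p^\ast=V(q^\ast)$, split on whether the social optimum lies above or below $q^\ast$, use the revealed-preference inequalities $C(q')\geqslant R(q')-\fp$ to bound the distortion, and extract the $-qp\log q$ term from the $1/z$ decay of $V$ that these inequalities force (this is exactly the content of the paper's Lemma~\ref{le:thelemma}). The overproduction half and the bookkeeping with $s\leqslant r_\alpha$ are also as in the paper.

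The genuine gap is in how you close the underproduction case using $s\geqslant s_\alpha$. Your mechanism --- at a ``bad'' pair the subsidy covers the whole gap $q(k_\alpha-p)$, so the firm ``strictly prefers to deviate toward a larger quantity'' and such pairs never arise as best responses --- fails: the adversary controls $C$ and can make any additional unit arbitrarily costly, so the firm can be made to stop at any feasible $(q,p)$. Moreover, the pair at which the logarithmic bound must be checked is not the firm's chosen $(q^\ast,p^\ast)$ but a derived one: the paper first locates the quantity $\bar q$ at which the subsidy is exhausted, and only past $\bar q$, at the effective price $\bar p=k_\alpha-s/\bar q$, does the lemma produce a $-q'\bar p\log q'$ term. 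The correct use of $s\geqslant s_\alpha$ is then the reverse of yours: for $q'>\bar q$ one has $q'(k_\alpha-\bar p)>s\geqslant s_\alpha$, so by the definition of $s_\alpha$ as a supremum the pair $(q',\bar p)$ is \emph{excluded} from the set where $(1-\alpha)q'k_\alpha-q'\bar p\log q'>r_\alpha$, and a continuity argument extends the resulting inequality to $\bar q$ itself. On the interval $[q^\ast,\bar q]$, where the subsidy is not exhausted, the firm's effective marginal revenue is $k_\alpha$, so its refusal to expand reveals marginal cost at least $k_\alpha$ and the per-unit distortion is at most $\bar v-k_\alpha$, giving a logarithm-free bound $\max((1-\alpha)k_\alpha,\bar v-k_\alpha)\leqslant r_\alpha$. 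Without this two-region decomposition at $\bar q$, your four endpoint subcases do not control $\int_{q^\ast}^{q^o}V(z)\diff z$, because the binding arm of the $\min$ can switch strictly between the endpoints.
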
 
We first provide some intuition as to how a policy of the form \eqref{eq:optimalform} addresses the three goals of giving more surplus to consumers, mitigating underproduction, and mitigating overproduction simultaneously. First, the firm can't get more than $k_\alpha$ for each unit it sells. This caps how much consumers' surplus that the firm can extract. Second, a monopolistic firm has the tendency to serve just those consumers with very high values. In order to incentivize the firm to produce more, the regulator subsidizes the firm for the difference between its price and $k_\alpha$. This piece-rate subsidy effectively increases the firm's selling price to $k_\alpha$. Third, the firm's total subsidy is capped by $s$, so the potential overproduction induced by subsidy is also under control.

Depending on how much the regulator cares about the firm's profit, he puts different weights on these three goals, and hence varies $k_\alpha$ and $s$ as $\alpha$ varies. 

The explicit value of $s_\alpha$ is given below, and is depicted as the dashed line in the middle panel of Figure \ref{fg:lower}. 
\begin{equation*}
 s_\alpha=
\begin{cases}
\bar v \frac{\alpha}{2-\alpha} &  \alpha \leqslant \frac{1}{2} \\
 r_\alpha &  \alpha > \frac{1}{2}.
\end{cases}
	\end{equation*}
Note that $s_\alpha=0$ when $\alpha=0$. Hence, the policy $\rho(q,p)=q \min( \bar v/2, p )$ is optimal when $\alpha=0$, and it simply requires that the firm get less than $\bar v/2$ per unit. 


The optimal policy in Theorem~\ref{th:upper} features three properties. First, the fact that $\rho(q,p)\leqslant  q k_\alpha$ for every $q$ implies a price cap: The firm cannot get more than $k_\alpha$ per unit sold. To see the price cap more explicitly, consider the  policy $\tilde\rho$ given by:
\begin{equation*}
 \tilde\rho(q,p)=
\begin{cases}
\rho(q,p) &  p\leqslant k_\alpha \\
 -\infty &  p>k_\alpha.
\end{cases}
	\end{equation*}
The policy $\tilde\rho$ induces similar behavior to that of $\rho$ in the sense that $(q,p)$ is a best response to $\rho$ if and only if $(q,\min(p,k_\alpha))$ is a best response to $\tilde\rho$ and these responses give the same consumers' surplus. Therefore, by Theorem~\ref{th:upper}, $\tilde\rho$ is also optimal. The second property is that for some quantity-price pairs the total subsidy to the firm is at least $s_\alpha$. The third property is that the total subsidy to the firm is at most $r_\alpha$. Theorem~\ref{th:properties} asserts that every optimal policy has similar properties. Recall that $q_\alpha$ achieves the maximum in the definition of $r_\alpha$ in~\eqref{eq:regretlower}.
\begin{theorem}\label{th:properties}
  Let $\rho$ be an optimal policy. Then
  \begin{enumerate}
  \item (Price cap): $\rho(q,p)\leqslant q k_\alpha $ for every $q\leqslant q_\alpha$.
  \item (Subsidy): There exists some $(q,p)$ such that $\rho(q,p) \geqslant qp + s_\alpha$.
  \item (Subsidy cap): $\rho(q,p) \leqslant qp + r_\alpha$ for every $(q,p)$.
  \end{enumerate}
\end{theorem}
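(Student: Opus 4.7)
My plan is to prove each of the three properties by contraposition. Suppose $\rho$ is optimal, so by Theorems~\ref{th:lower} and~\ref{th:upper} its worst-case regret equals $r_\alpha$. If some property fails at a pair $(q_0,p_0)$, I will exhibit a market $(V,C)$ for which the firm's best-response regret under $\rho$ strictly exceeds $r_\alpha$, a contradiction.

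The two families of test markets I will use are those that underlie the lower-bound argument in Theorem~\ref{th:lower}. The \emph{underproduction} family combines an equal-revenue (Pareto-type) inverse demand with a suitable fixed cost; the firm is inclined to shut down, and the distortion $\dstr$ dominates the regret, pushing it up to roughly $(1-\alpha)qk_\alpha - qp\log q$ when the relevant revenue-quantity pair is near $(q,p)$. The \emph{overproduction} family uses a nearly flat demand $V\equiv p$ truncated above $q$ with a small cost concentrated near $q$; the firm is happy to supply too much, every unit of excess revenue becomes profit, and $(1-\alpha)\fp$ drives the regret up to roughly $q(k_\alpha - p)$. Each property will be falsified by applying one of these two families, anchored at the violating pair $(q_0,p_0)$.

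For property 1, assume $q_0\leq q_\alpha$ and $\rho(q_0,p_0)>q_0 k_\alpha$, and pit $\rho$ against the overproduction market at $(q_0,p_0)$. Since the firm's per-unit revenue strictly exceeds $k_\alpha$, its profit strictly exceeds $q_0(k_\alpha-p_0)$; combined with the distortion built into that market, the regret exceeds the value $q_0(k_\alpha-p_0)$, which equals $r_\alpha$ exactly when $q_0=q_\alpha$ attains the max in~\eqref{eq:regretlower} and exceeds $r_\alpha$ once the per-unit slack is strictly positive. For property 2, assume that $\rho(q,p)-qp<s_\alpha$ for every $(q,p)$. The definition of $s_\alpha$ as the supremum of $q(k_\alpha-p)$ over the open set $\{(1-\alpha)qk_\alpha - qp\log q>r_\alpha\}$ lets me pick $(q,p)$ in that set with $q(k_\alpha-p)$ arbitrarily close to $s_\alpha$. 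Testing $\rho$ against the underproduction market at this $(q,p)$ and using that the firm's actual net subsidy falls short of $q(k_\alpha-p)$ by a positive margin, the resulting distortion exceeds $(1-\alpha)qk_\alpha-qp\log q - \varepsilon>r_\alpha$. For property 3, assume $\rho(q_0,p_0)>q_0 p_0+r_\alpha$ and use an overproduction market anchored at $(q_0,p_0)$ in which the cost is calibrated so that OPT has a piece coming from distortion and the firm's realized profit contains the excess subsidy $\rho(q_0,p_0)-q_0 p_0$. The $(1-\alpha)\fp$ and $\dstr$ contributions together push regret past $r_\alpha$.

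The main obstacle I expect is in property 3: a crude scenario with $\opt=0$ only yields regret $\geq (1-\alpha)$ times the excess subsidy, which is not enough. So the calibration has to be delicate --- one has to arrange both a positive social surplus (so some of the bound comes from $\dstr$) and a large firm profit (from the oversized subsidy) in the same market, and then verify the firm's best response really hits the desired $(q_0,p_0)$-neighborhood rather than slipping to a cheaper alternative. Getting the cost function lower-semicontinuous and the demand feasible while still trapping the firm at the violating pair is what I expect to take the most care across all three properties.
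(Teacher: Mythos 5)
Your treatment of property 2 tracks the paper's argument: pick $(q,p)$ in the open set $\{(1-\alpha)qk_\alpha-qp\log q>r_\alpha\}$ whose value $q(k_\alpha-p)$ strictly exceeds the (attained) maximum of $\rho(q',p')-q'p'$, conclude the firm never collects enough subsidy to be willing to supply $q$ units under the equal-revenue demand $V_{q,p}$, and read off regret at least $(1-\alpha)qk_\alpha-qp\log q>r_\alpha$ via Claim~\ref{cl:underq}. Properties 1 and 3, however, have genuine problems as outlined.

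For property 3, the premise behind your ``delicate calibration'' is a miscalculation. With demand $W_{q_0,p_0}$ and a \emph{fixed cost equal to} $\rho(q_0,p_0)$, the firm (breaking ties against the regulator) sells $q_0$ at $p_0$ with $\fp=0$, so the entire excess $\rho(q_0,p_0)-q_0p_0$ is burned as socially wasteful production and enters the regret through $\dstr$ with coefficient $1$, not through $(1-\alpha)\fp$. Hence $\rgrt=\rho(q_0,p_0)-q_0p_0>r_\alpha$ immediately; this is exactly Claim~\ref{cl:lb-over}, and no market with positive $\opt$ or any mixing of profit and distortion is needed.

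For property 1 the proposed argument does not close. The violating pair $(q_0,p_0)$ is arbitrary subject to $q_0\leqslant q_\alpha$ and $\rho(q_0,p_0)>q_0k_\alpha$; an overproduction market anchored there yields regret of order $\rho(q_0,p_0)-q_0p_0$, and neither this quantity nor $q_0(k_\alpha-p_0)$ need be anywhere near $r_\alpha$ (take $p_0$ close to $k_\alpha$ and the slack $\rho(q_0,p_0)-q_0k_\alpha$ tiny). You cannot simply relocate the test to $(q_\alpha,p_\alpha)$, because that is not where the cap is violated. The paper's resolution is to keep the test market aimed at the maximizer $(q_\alpha,p_\alpha)$ but to exploit the violation as a \emph{fallback option} for the firm: in Claim~\ref{cl:lb-withunderq} with the fallback quantity set to $q_0$, the firm can always retreat to producing $q_0$ units and collect $\bar\rho(q_0)>q_0k_\alpha$, which inflates the $(1-\alpha)\fp$ term in the underproduction scenario and raises the subsidy available in the overproduction scenario, so \emph{both} branches of the min strictly exceed $r_\alpha$. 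That two-sided use of the slack at $q_0$, mediated by the auxiliary quantities $\bar\rho$ and $\hat\rho$, is the idea your outline is missing.
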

In particular, since $q_\alpha=1$ for $\alpha\leqslant 1/2$, it follows from Theorem~\ref{th:properties} that for $\alpha\leqslant 1/2$ a price cap at $k_\alpha$ is necessary for every level of production.


\section{Discussions}

\subsection{Incorporating additional knowledge}
\label{re:costs}
In our model we made no assumptions on the inverse-demand or the cost functions except for monotonicity, semicontinuity, and the range of consumers' values (between $0$ and $\bar v$). The regulator may know more than this. We can extend our framework in an obvious way to incorporate the regulator's knowledge by restricting the set of inverse-demand and cost functions in the regulator's problem. For instance, the regulator may know that the firm has a constant marginal cost together with a fixed cost, but doesn't know these cost levels. This is the type of cost functions used most frequently in studies of monopoly regulation. 

In our proof of Theorem \ref{th:lower}, we establish a lower bound on the worst-case regret of any policy using only fixed cost functions, i.e., $C(q)$ is constant for any $q>0$. (See remark \ref{rm:fixedcost} for details.) This means that Theorem \ref{th:lower} remains true for every set of cost functions that includes the set of all fixed cost functions. Once we know that Theorem \ref{th:lower} remains true, we know that our policy in Theorem~\ref{th:upper} remains optimal, since the worst-case regret under our policy is at most $r_\alpha$ across all inverse-demand and cost functions.

Of course, the regulator may do better than $r_\alpha$ if he has significant knowledge about the industry. We believe that incorporating the regulator's additional knowledge is an exciting direction for future research, which will demonstrate the adaptability of the worst-case regret approach. Our analysis and policy serve as the very first step toward understanding the optimal policy for any particular industry. 


\subsection{The efficient rationing assumption}

In our model we allow the firm not to clear the market, and we assume that if this happens then the consumers who are being served are the ones with higher values. Indeed, absent some additional assumptions on the cost function, even a firm which operates under a price cap may prefer not to clear the market. 

A common assumption in the monopoly regulation literature is that the firm has decreasing average cost, i.e., the average cost $C(q)/q$ is decreasing for $q>0$. Since the set of all fixed cost functions satisfies the decreasing average property, by subsection~\ref{re:costs} Theorem \ref{th:lower} remains correct under this decreasing average assumption on the cost function, and our policy in Theorem \ref{th:upper} is optimal. Moreover, if the cost function satisfies this assumption, then a firm which operates under our policy will want to clear the market.



\section{Proofs}

\subsection{Preliminaries}
For every $q,p$ we let $V_{q,p}$ and $W_{q,p}$ be the inverse-demand functions given by: \[V_{q,p}(z)=\begin{cases} \bar v, &\text{ if }z\leqslant q\\ \frac{qp}{z}, &\text{ if }q<z\leqslant 1,\end{cases}\text{  and   }W_{q,p}(z)=\begin{cases}p, &\text{ if }z\leqslant q\\0, &\text{ if }q<z\leqslant 1,\end{cases}\]
as shown in Figure \ref{fg:prelim}. The inverse demand $W_{q,p}$ has the property that, among all inverse-demand functions under which $(q,p)$ is feasible, $W_{q,p}$ generates the least total value to consumers. 

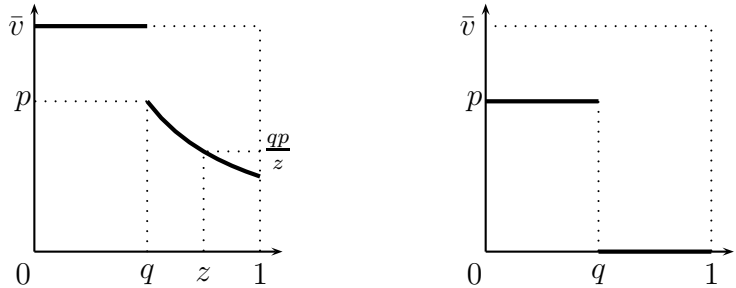
\begin{figure}[htb!]
\begin{center}
\psset{xunit=3,yunit=3}
\begin{pspicture}(-.1,-.1)(3.2,1.1)

\psline[arrows=->](0,0)(1.1,0)
\psline[arrows=->](0,0)(0,1.1)
\rput[c](-.05,-.1){$0$}
\rput[c](1,-.1){$1$}
\rput[c](-.08,1){$\bar v$}
\psline[linestyle=dotted](1,0)(1,1)(0,1)
\rput[c](.5,-.1){$q$}
\rput[c](-.05,.66667){$p$}
\psline[linestyle=dotted](0,.666667)(.5,.666667)(.5,0)

\psline[linestyle=dotted](0.75,0)(0.75, 0.444444)(1,0.444444)
\rput[c](0.75,-.1){$z$}
\rput[c](1.08,0.444444){$\frac{q p}{z}$}

\psline[linewidth=1.6pt,linecolor=black](0,1)(0.5,1)
\psline[linewidth=1.6pt,linecolor=black](0.5, 0.666667)(0.5625, 0.592593)(0.625, 0.533333)(0.6875, 
  0.484848)(0.75, 0.444444)(0.8125, 0.410256)(0.875, 
  0.380952)(0.9375, 0.355556)(1., 0.333333)


\psline[arrows=->](2,0)(3.1,0)
\psline[arrows=->](2,0)(2,1.1)
\rput[c](1.95,-.1){$0$}
\rput[c](3,-.1){$1$}
\rput[c](2.5,-.1){$q$}
\psline[linestyle=dotted](2.5,0)(2.5,0.666667)
\rput[c](1.92,1){$\bar v$}
\rput[c](1.95,0.666667){$p$}
\psline[linestyle=dotted](3,0)(3,1)(2,1)
\psline[linewidth=1.6pt,linecolor=black](2,0.666667)(2.5,0.666667)
\psline[linewidth=1.6pt,linecolor=black](2.5,0)(3,0)

\end{pspicture}
\end{center}
\caption{$V_{q,p}$ and $W_{q,p}$ demand}
\label{fg:prelim}
\end{figure}

To understand the role of $V_{q,p}$ in our argument, consider an unregulated firm (i.e., a firm which operates under the policy $\rho(q,p)=qp$). If the inverse-demand function is $V_{q,p}$ and the  cost is zero, then selling $q$ units at price $\bar v$ is a firm's best response. This response causes distortion of $\int_{q}^1 V_{q,p}(z)~\td z=-qp\log q$ due to underproduction. The following lemma shows that this is the worst distortion that can happen when the firm is unregulated.

\begin{lemma}\label{le:thelemma}
Assume that an unregulated firm sells $\bar q$ units at a price $\bar p$ such that $\bar p \geqslant \sup_{z>\bar q}{V(z)}$ . Let
\[
\opt_{\bar q} = \max_{q\geqslant \bar q} \int_{\bar q}^q V(z)~\textup{d} z - (C(q)-C(\bar q))	 
\]
be the maximal additional surplus to society if the firm has produced $\bar q$ units,
and let
\[\fp_{\bar q} = \max_{q\geqslant \bar q} \;  q  \min(\bar p, V(q)) - \bar q\bar p -(C(q)- C(\bar q)),\]
be the maximal additional profit to the firm if it has produced $\bar q$ units and commits to price at most $\bar p$.
Then \[\opt_{\bar q} \leqslant \fp_{\bar q} + \varphi(\bar q) \bar p,\]
where \[\varphi(q)=\begin{cases} \frac{1}{e}, &\text{ if } q<\frac{1}{e} \\-q\log q,&\text{ if }q\geqslant \frac{1}{e},\end{cases}\] is the least decreasing majorant of $q \mapsto -q\log q$.\end{lemma}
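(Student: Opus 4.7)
The plan is to pick some $q^*\in[\bar q,1]$ attaining $\opt_{\bar q}$ (existence follows from upper-semicontinuity of $V$ and lower-semicontinuity of $C$); the case $q^*=\bar q$ is immediate since then $\opt_{\bar q}=0\leqslant \fp_{\bar q}+\varphi(\bar q)\bar p$, so assume $q^*>\bar q$. Define $R=\sup_{q\in(\bar q,q^*]} qV(q)$ and let $q^\dagger\in(\bar q,q^*]$ attain (or approach) this sup. Two lower bounds are available for $\fp_{\bar q}$: the participation bound $\fp_{\bar q}\geqslant 0$, and the strategy bound $\fp_{\bar q}\geqslant q^\dagger V(q^\dagger)-\bar q\bar p-(C(q^\dagger)-C(\bar q))$. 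Since $q^\dagger\leqslant q^*$ and $C$ is increasing, combining these two bounds cancels the cost terms and yields the crucial estimate
\[
\opt_{\bar q}-\fp_{\bar q}\;\leqslant\;\int_{\bar q}^{q^*} V(z)\,dz\;-\;\max\bigl(0,\,R-\bar q\bar p\bigr).
\]

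The next step exploits the pointwise majorization $V(z)\leqslant \min(\bar p,\,R/z)$ on $(\bar q,q^*]$, valid because $V$ is decreasing with $V\leqslant\bar p$ on $(\bar q,1]$ and $zV(z)\leqslant R$ by definition. Integrating and subtracting $\max(0,R-\bar q\bar p)$ gives a function of $R$ alone,
\[
\Psi(R)\;:=\;\int_{\bar q}^{q^*}\min(\bar p,\,R/z)\,dz\;-\;\max\bigl(0,\,R-\bar q\bar p\bigr),
\]
and splitting the integral at the transition point $z=R/\bar p$ yields explicitly $\Psi(R)=R\log(q^*/\bar q)$ on $R\in[0,\bar p\bar q]$ and $\Psi(R)=R\log(q^*\bar p/R)$ on $R\in[\bar p\bar q,\,\bar p q^*]$.

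The remaining task is a one-variable calculus maximization: on the first branch $\Psi$ is linear, peaking at $R=\bar p\bar q$ with value $\bar p\bar q\log(q^*/\bar q)$; on the second it is unimodal with interior maximum $q^*\bar p/e$ at $R=q^*\bar p/e$, provided $q^*\geqslant e\bar q$. When $q^*\geqslant e\bar q$ (which forces $\bar q\leqslant 1/e$) the global maximum is $q^*\bar p/e\leqslant \bar p/e=\varphi(\bar q)\bar p$; otherwise the global maximum is $\bar p\bar q\log(q^*/\bar q)\leqslant -\bar p\bar q\log\bar q$, which is bounded by $\varphi(\bar q)\bar p$ in both regimes of $\bar q$ by the very definition of $\varphi$ as the least decreasing majorant of $-q\log q$ (equal to $1/e$ for $q\leqslant 1/e$ and to $-q\log q$ for $q\geqslant 1/e$).

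The main obstacle I anticipate is packaging the two distinct lower bounds for $\fp_{\bar q}$ (the trivial bound $0$ and the strategy bound via $q^\dagger$) into the single expression $\max(0,R-\bar q\bar p)$ that makes $\Psi$ uniformly controllable: using only one bound is insufficient in each regime of $R$ respectively. The restriction $q^\dagger\in(\bar q,q^*]$ is essential, since it both ensures $C(q^\dagger)\leqslant C(q^*)$ so the cost terms vanish, and aligns the final case split with the piecewise definition of $\varphi$.
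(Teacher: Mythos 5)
Your proof is correct and follows essentially the same route as the paper's: both arguments use the firm's unwillingness to expand beyond $\bar q$ to majorize $V$ on $(\bar q, q^*]$ by an envelope that is flat at $\bar p$ and then hyperbolic ($\mathrm{const}/z$), and then reduce the claim to the calculus fact that $x\mapsto -x\log x$ is dominated by $\varphi$. The only organizational difference is that you keep $\fp_{\bar q}$ explicit and optimize over the free envelope level $R$, whereas the paper normalizes $\fp_{\bar q}=0$ and evaluates at the specific level $\bar q\bar p + c'$ (its $q''\bar p$); the mathematical content is the same.
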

The lemma does not assume that selling $\bar q$ units or more is optimal for the firm. Therefore, the assertion in the lemma still holds even if the best response for an unregulated firm is to produce less than $\bar q$ units at a possibly higher price than $\bar p$.
\begin{proofof}{Lemma~\ref{le:thelemma}}
  We can assume that $\fp_{\bar q}=0$, otherwise replace $C$ with $\tilde C$ such that $\tilde C(z)=C(z)$ if $z\leqslant {\bar q}$ and $\tilde C(z)= C(z)+\fp_{\bar q}$  if  $z>{\bar q}$.
  
  Let $q'\in\argmax_q  \int_{\bar q}^{q} V(z)~\td z - (C(q)-C({\bar q}))$. Let $c'=C(q')-C({\bar q})$. 

  Since the firm does not want to produce more, it follows that $z V(z)-C(z)\leqslant {\bar q}\bar p - C({\bar q})$ for every $z>\bar q $, so that \[V(z)\leqslant \frac{{\bar q}\bar p+C(z)-C({\bar q})}{z}  \leqslant \frac{{\bar q}\bar p+c'}{z},\] for $\bar q < z \leqslant q'$.
  
Since $V(z)\leqslant \bar p$ for $z>\bar q$ it follows from the definition of $q'$ that $c'\leqslant (q'-{\bar q})\bar p$.  Let $q{''}= {\bar q}+\frac{c'}{\bar p}$, so $q{''}\leqslant q'$. Then 
\begin{multline*}\int_{\bar q}^{q'} V(z)~\td z - c' \leqslant (q{''}-{\bar q})\bar p + \int_{q{''}}^{q'} \frac{{\bar q}\bar p+c'}{z}~\td z- c' \leqslant (q{''}-{\bar q})\bar p + \int_{q{''}}^{1}\frac{{\bar q}\bar p+c'}{z}~\td z- c' \\ =\int_{q{''}}^{1}\frac{{\bar q}\bar p+c'}{z}~\td z= -q{''} \bar p  \log q{''} \leqslant \varphi({\bar q})\bar p,\end{multline*}
where the first step uses the fact that $V(z)\leqslant \bar p$ for ${\bar q}< z\leqslant q{''}$, the second to last step follows from ${\bar q}\bar p+c'= q{''}\bar p$, and the last step follows from $\bar q \leqslant q{''}$ and the definition of $\varphi$.
\end{proofof}
For $\bar q=0$ and $\bar p=\bar v$ the lemma has the following corollary which is interesting for its own sake. It bounds from below an unregulated firm's profit in a market with a high social optimum. We are unaware of previous statements of this corollary, but similar arguments to those in the proof of Lemma~\ref{le:thelemma}  with zero cost appeared in \cite{RoeslerSzentes2017}.
\begin{corollary}
  For an unregulated firm which best responds to  $(V,C)$, we have \[\fp \geqslant \opt -  \frac{\bar v}{e}.\]
\end{corollary}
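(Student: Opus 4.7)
The plan is to apply Lemma~\ref{le:thelemma} directly, choosing the boundary case $\bar q = 0$ and $\bar p = \bar v$. First I would verify the hypothesis that $\bar p \geqslant \sup_{z > \bar q} V(z)$: this holds trivially because $V$ takes values in $[0, \bar v]$. Also, since $0 < 1/e$, we have $\varphi(0) = 1/e$ by the definition of $\varphi$.

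Next I would identify the three quantities appearing in the conclusion of the lemma with natural objects in our setting. Since $C(0)=0$, the quantity $\opt_0 = \max_{q \geqslant 0}\left( \int_0^q V(z)\,\td z - C(q)\right)$ is exactly $\opt$. For the firm's profit term, $\min(\bar v, V(q)) = V(q)$ for all $q$, so $\fp_0 = \max_{q \geqslant 0} \bigl( qV(q) - C(q)\bigr)$. Since an unregulated firm operating under $\rho(q,p) = qp$ chooses $(q,p)$ to maximize $qp - C(q)$ subject to $p \leqslant V(q)$, and the optimum over $p$ for each fixed $q$ is $p = V(q)$ (achievable because $V$ is upper-semicontinuous), the firm's best-response profit is exactly $\fp = \max_q qV(q) - C(q) = \fp_0$.

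Plugging these identifications into Lemma~\ref{le:thelemma} yields $\opt \leqslant \fp + \varphi(0) \bar v = \fp + \bar v/e$, which rearranges to the desired inequality $\fp \geqslant \opt - \bar v/e$. There is no real obstacle here; the content of the corollary is entirely embedded in Lemma~\ref{le:thelemma}, and the proof is a matter of specializing parameters and verifying that the boundary case $\bar q = 0$ is admissible.
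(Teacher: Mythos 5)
Your proof is correct and follows exactly the paper's intended route: the paper derives the corollary by specializing Lemma~\ref{le:thelemma} to $\bar q=0$, $\bar p=\bar v$, and your verification that $\opt_0=\opt$ (using $C(0)=0$), $\fp_0=\max_q\bigl(qV(q)-C(q)\bigr)=\fp$, and $\varphi(0)=1/e$ fills in precisely the routine details the paper leaves implicit. No gaps.
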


\subsection{Lower bound on worst-case regret}
For a policy $\rho$ let
\[\wcr(\rho)=\; \max_{V,C,q,p} \; \rgrt\]
where the maximum ranges over all $(V,C)$ and all the firm's best responses $(q,p)$ to $(V,C)$ under $\rho$.

For a policy $\rho$ let $\bar\rho(q)=\max_{q'\leqslant q, p'\leqslant \bar v}\rho(q',p')$ be the maximal revenue the firm can get under $\rho$ from selling $q$ units or less, and let $\hat\rho(q,p)=\max_{q'\geqslant q, q'p'\leqslant qp}\rho(q',p')$ be the maximal revenue under $\rho$ if the firm sells at least $q$ units and the revenue from the marketplace is at most $qp$. As shown in Figure \ref{fg:piclowerproof}, $\bar\rho(q)$ is the maximum of $\rho$ in the light-gray area, and $\hat\rho(q,p)$ is the maximum of $\rho$ in the dark-gray area. 
 \begin{figure}[htb!]
\begin{center}
\psset{xunit=3,yunit=3}
\begin{pspicture}(-.1,-.1)(1.1,1.1)

\pscustom[linewidth=0pt,fillstyle=solid,fillcolor=mygray,linecolor=mygray]{ 
\psline(0,0)(.5,0)(.5,1)(0,1)}
\rput[c](.25,.5){\textbf{{\color{black}$\bar \rho(q)$}}}
\pscustom[linewidth=0pt,fillstyle=solid,fillcolor=mylightblue,linecolor=mylightblue]{ 
\psline(.5,0)(0.5, 0.666667)(0.5625, 0.592593)(0.625, 0.533333)(0.6875, 
  0.484848)(0.75, 0.444444)(0.8125, 0.410256)(0.875, 
  0.380952)(0.9375, 0.355556)(1., 0.333333)(1,0)}
\rput[c](0.75,.225){\textbf{{\color{black}$\hat \rho(q,p)$}}}

\psline[arrows=->](0,0)(1.1,0)
\psline[arrows=->](0,0)(0,1.1)
\rput[c](-.05,-.1){$0$}
\rput[c](1,-.1){$1$}
\rput[c](-.08,1){$\bar v$}
\psline[linestyle=dotted](1,0)(1,1)(0,1)
\rput[c](.5,-.1){$q$}
\rput[c](-.05,.66667){$p$}
\psline[linestyle=dotted](0,.666667)(.5,.666667)(.5,0)

\psline[linestyle=dotted](0.75,0)(0.75, 0.14)
\psline[linestyle=dotted](0.75,.28)(0.75, 0.444444)
\psline[linestyle=dotted](0.75, 0.444444)(1,0.444444)
\rput[c](0.75,-.1){$z$}
\rput[c](1.08,0.444444){$\frac{p q}{z}$}




\end{pspicture}
\end{center}
\caption{Definitions of $\bar\rho(q)$ and $\hat\rho(q,p)$}
\label{fg:piclowerproof}
\end{figure}
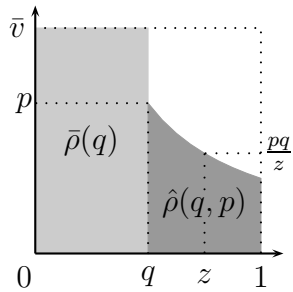

We first show that the worst-case regret under a policy is at least the maximum subsidy that this policy offers. 
\begin{claim}\label{cl:lb-over}Fix a policy $\rho$. Then $\wcr(\rho)\geqslant \rho(q,p)-qp$ for every $q,p$.\end{claim}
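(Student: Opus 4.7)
The plan is to establish the lower bound by constructing, for each pair $(q,p)$, an adversarial demand-cost scenario $(V,C)$ under which some firm's best response under $\rho$ yields regret at least $\rho(q,p) - qp$, thereby witnessing $\wcr(\rho) \geq \rho(q,p) - qp$. I will assume $\rho(q,p) > qp$ since otherwise $\wcr(\rho) \geq 0 \geq \rho(q,p) - qp$ trivially.

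I would take $V = W_{q,p}$ (the inverse-demand function with value $p$ for quantities up to $q$ and $0$ beyond, from the preliminaries) and $C$ to be the fixed-cost function with $C(0) = 0$ and $C(q') = \rho(q,p)$ for every $q' > 0$. Under this choice, any positive production gives $\int_0^{q'} V - C(q') = p\min(q', q) - \rho(q,p) \leq pq - \rho(q,p) < 0$, so the social optimum is $\opt = 0$. Moreover, the firm's profit at $(q,p)$ equals $\rho(q,p) - C(q) = 0$, which matches the participation value, so $(q,p)$ is a firm's best response (selected by the tie-breaking rule against the regulator). A direct computation at this response gives $\cs = pq - \rho(q,p)$ and $\fp = 0$, whence
\[
\rgrt = \opt - \cs - \alpha \fp = 0 - (pq - \rho(q,p)) - 0 = \rho(q,p) - qp,
\]
which already establishes the claim through this particular response.

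The main technical step is verifying that the same lower bound on regret holds for every firm's best response in the constructed scenario, not merely for $(q,p)$, since the tie-breaking convention quantifies over best responses. For an alternative best response $(q^*, p^*)$ with $q^* > 0$, the firm's profit must be at least its profit at $(q,p)$, namely zero, so $\rho(q^*, p^*) \geq C(q^*) = \rho(q,p)$. Substituting into the regret formula and using $p\min(q^*,q) \leq pq$ yields
\[
\rgrt = -p\min(q^*, q) + (1-\alpha)\rho(q^*, p^*) + \alpha \rho(q,p) \geq -pq + \rho(q,p) = \rho(q,p) - qp.
\]
The hardest part of the argument is ensuring that the firm does not escape the construction by selecting a response with $q^* = 0$; this is handled by the observation that at such a response the firm would earn profit $\rho(0, p^*)$, so the firm strictly prefers producing whenever $\rho(0, p^*) \leq 0$, and otherwise the claim applied to the pair $(0, p^*)$ already yields a lower bound at least $\rho(0, p^*) \geq \rho(q,p) - qp$.
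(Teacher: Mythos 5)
Your construction is exactly the paper's: the inverse demand $W_{q,p}$ together with the fixed cost $C(q')=\rho(q,p)$ for $q'>0$, and the computation $\opt=0$, $\fp=0$, $\rgrt=\rho(q,p)-qp$ at the response $(q,p)$ matches the paper's one\-/line argument. Where you go beyond the paper is in worrying whether $(q,p)$ is actually a best response in this scenario --- the paper simply asserts ``the firm will produce $q$ units at price $p$,'' which is only valid if no feasible alternative yields strictly positive profit. Your concern is legitimate, and your treatment of alternatives with $q^*>0$ is correct and is a genuine strengthening: the identity $\rgrt=-p\min(q^*,q)+(1-\alpha)\rho(q^*,p^*)+\alpha\rho(q,p)$ together with $\rho(q^*,p^*)\geqslant\rho(q,p)$ (from the firm weakly preferring $(q^*,p^*)$ to $(q,p)$) does give $\rgrt\geqslant\rho(q,p)-qp$ at any such best response.

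The gap is in your last step, the case where the firm escapes to $q^*=0$ with $\rho(0,p^*)>0$. First, ``the claim applied to the pair $(0,p^*)$'' is circular --- you are in the middle of proving that claim, and the pair $(0,p^*)$ is precisely an instance where your own construction (demand $W_{0,p^*}$, fixed cost $\rho(0,p^*)$) again leaves the firm free to stay at quantity zero, so the induction never bottoms out. Second, even granting $\wcr(\rho)\geqslant\rho(0,p^*)$, the asserted inequality $\rho(0,p^*)\geqslant\rho(q,p)-qp$ has no justification: the firm prefers $(0,p^*)$ to $(q,p)$ as soon as $\rho(0,p^*)>0$, however small, whereas $\rho(q,p)-qp$ may be large. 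At such a best response $\dstr=0$ and $\rgrt=(1-\alpha)\rho(0,p^*)$, which need not exceed $\rho(q,p)-qp$. So your patch does not close the case it was meant to close. To be fair, the paper's own proof silently assumes this case away (it is vacuous for policies with $\rho(0,p')\leqslant 0$, e.g.\ the optimal policy $\min(qk_\alpha,qp+s)$, for which $(q,p)$ is indeed among the firm's best responses and the tie\-/breaking convention --- maximum over best responses --- lets one select it); but as written your argument does not establish the claim for arbitrary policies any more than a direct appeal to the paper's construction would.
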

\begin{proof}If $\rho(q,p)\leqslant qp$ the assertion follows from the fact that regret is nonnegative. Assume that $\rho(q,p)\geqslant qp$ and consider the inverse-demand function $W_{q,p}$ and a  fixed cost $\rho(q,p)$. Then the firm will produce $q$ units at price $p$, with $\fp=0$ and
  \[\rgrt = \dstr = \rho(q,p) - qp,\]
  because of overproduction.
\end{proof}

We then show that, if the firm doesn't receive sufficiently more revenue from producing more, there is sizable regret due to underproduction.
\begin{claim}\label{cl:underq}Fix a policy $\rho$. Let $\underbar q\leqslant q\in [0,1]$ and let $p\in [0,k_\alpha]$. If $\hat\rho(q,p) \leqslant \bar\rho(\underbar q) + (q-\underbar q)k_\alpha$ then
  \[\wcr(\rho)\geqslant (1-\alpha)(\bar\rho(\underbar q)+(q-\underbar q)k_\alpha) - qp\log q.\]
    \end{claim}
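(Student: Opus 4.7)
I would construct, for every $\epsilon>0$, specific inverse-demand and cost functions $(V,C)$ together with a firm's best response whose regret is at least the claimed lower bound minus $\epsilon$; letting $\epsilon\downarrow 0$ then completes the proof. In both cases I take $V=V_{q,p}$, which makes $\int_0^1 V_{q,p}\,\mathrm{d}z=\bar v q-qp\log q$ and produces the $-qp\log q$ term in the bound from underproduction past $q$. The argument splits according to whether $\bar\rho(q)\leqslant\bar\rho(\underbar q)+(q-\underbar q)k_\alpha$ or not.

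\textbf{Case A ($\bar\rho(q)\leqslant\bar\rho(\underbar q)+(q-\underbar q)k_\alpha$).} I would define the jump cost $C(z)=0$ for $z\leqslant\underbar q$ and $C(z)=c:=(q-\underbar q)k_\alpha+\epsilon$ for $z>\underbar q$ (lower semicontinuous since $C$ jumps upward at $\underbar q$). The claim's hypothesis combined with the case assumption gives $\max(\bar\rho(q),\hat\rho(q,p))-c\leqslant\bar\rho(\underbar q)-\epsilon<\bar\rho(\underbar q)$, so every firm best response has $q'\leqslant\underbar q$ with profit exactly $\bar\rho(\underbar q)$. A direct check (using $\bar v\geqslant k_\alpha$ and $-qp\log q\geqslant 0$) shows that for small $\epsilon$ the social optimum is attained at $z=1$, giving $\opt=\bar v q-qp\log q-c$. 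For any firm best response $(q^*,p^*)$ with $q^*\leqslant\underbar q$, the regret is $\bar v(q-q^*)-qp\log q-c+(1-\alpha)\bar\rho(\underbar q)$; using $q^*\leqslant\underbar q$ and the identities $\bar v=(2-\alpha)k_\alpha$ and $\bar v-k_\alpha=(1-\alpha)k_\alpha$, this is at least $(1-\alpha)(\bar\rho(\underbar q)+(q-\underbar q)k_\alpha)-qp\log q-\epsilon$.

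\textbf{Case B ($\bar\rho(q)>\bar\rho(\underbar q)+(q-\underbar q)k_\alpha$).} I would take $C\equiv 0$. The claim's hypothesis gives $\hat\rho(q,p)\leqslant\bar\rho(\underbar q)+(q-\underbar q)k_\alpha<\bar\rho(q)$, so every argmax of $\rho$ over the feasible region under $V_{q,p}$ lies in $\{q'\leqslant q\}$ and yields firm profit $\bar\rho(q)$. Since $\opt=\bar v q-qp\log q$ and the social value at any such best response is at most $\bar v q$, the regret is at least $-qp\log q+(1-\alpha)\bar\rho(q)$, which exceeds the claimed bound by exactly $(1-\alpha)(\bar\rho(q)-\bar\rho(\underbar q)-(q-\underbar q)k_\alpha)>0$.

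\textbf{Main obstacle.} The delicate step is calibrating the cost $c$ in Case A: it must be simultaneously large enough that any quantity above $\underbar q$ is strictly unprofitable for the firm, and small enough that the social optimum is attained by producing up to $z=1$ rather than stopping at $\underbar q$. The value $c=(q-\underbar q)k_\alpha$ is uniquely pinned down by these opposing requirements, and the identity $\bar v=(2-\alpha)k_\alpha$ is exactly what makes the resulting regret match the claimed lower bound. The $\epsilon$ perturbation breaks ties in the firm's response and covers the boundary case where $\bar\rho(\underbar q)$ is attained only at $q^*=\underbar q$; case B instead requires the separate construction with zero cost because otherwise the firm keeps a profit of $\bar\rho(q)-c>\bar\rho(\underbar q)$ and the simple Case A algebra no longer closes.
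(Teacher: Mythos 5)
Your proof is correct and takes essentially the same route as the paper's: the same case split on whether $\bar\rho(q)-\bar\rho(\underbar q)$ exceeds $(q-\underbar q)k_\alpha$, with inverse demand $V_{q,p}$ and a fixed cost of $(q-\underbar q)k_\alpha$ above $\underbar q$ in one case and zero cost in the other. The only difference is your $\epsilon$-perturbation of the cost to force the firm below $\underbar q$, which the paper does not need because the worst case is taken over all of the firm's best responses (ties are broken against the regulator).
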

    \begin{proof}\begin{enumerate}
      \item If $\bar\rho(q)-\bar\rho(\underbar q)\leqslant (q-\underbar q)k_\alpha$, then consider the inverse-demand function $V_{q,p}$ and a cost function such that producing $\underbar q$ units or less is costless and producing additional units incurs a fixed cost of $(q-\underbar q)k_\alpha$. The firm will produce at most  $\underbar q$ units, with $\fp=\bar \rho(\underbar q)$ and \[\dstr\geqslant (q-\underbar q)(\bar v - k_\alpha) - qp\log q=(1-\alpha)  (q-\underbar q)k_\alpha-qp\log q,\] because of underproduction. Therefore
  \[\rgrt=(1-\alpha)\fp+\dstr\geqslant 
    (1-\alpha)(\bar \rho(\underbar q)+ (q-\underbar q)k_\alpha) - qp\log q.\]
  \item If $\bar\rho(q)-\bar\rho(\underbar q)\geqslant (q-\underbar q)k_\alpha$, then consider the inverse-demand function $V_{q,p}$ and zero cost.
The firm will produce at most $q$ units, with
  $\fp=\bar\rho(q)\geqslant \bar \rho(\underbar q)+(q-\underbar q)k_\alpha$ and $\dstr\geqslant -qp\log q$ because of underproduction. Therefore \[\rgrt=(1-\alpha)\fp+\dstr \geqslant (1-\alpha)\left(\bar \rho(\underbar q)+ (q-\underbar q)k_\alpha\right) -qp\log q.\]
  \end{enumerate}
\end{proof}

Combining Claims \ref{cl:lb-over} and \ref{cl:underq}, we show that the regulator suffers sizable regret either from underproduction or from overproduction. 
\begin{claim}\label{cl:lb-withunderq}Fix a policy $\rho$. Let $\underbar q\leqslant q\in [0,1]$ and let $p\in [0,k_\alpha]$. Then
  \[\wcr(\rho)\geqslant \min\bigl((1-\alpha)(\bar\rho(\underbar q)+(q-\underbar q)k_\alpha) - qp\log q, \bar\rho(\underbar q) + (q-\underbar q)k_\alpha-qp\bigr).\]
\end{claim}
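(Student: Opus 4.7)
Write $T := \bar\rho(\underbar q) + (q-\underbar q)k_\alpha$, so the goal becomes $\wcr(\rho)\geqslant \min\bigl((1-\alpha)T-qp\log q,\ T-qp\bigr)$. The plan is a short dichotomy based on whether the hypothesis of Claim~\ref{cl:underq} holds at $(\underbar q,q,p)$, namely whether or not $\hat\rho(q,p)\leqslant T$.

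In the first case, $\hat\rho(q,p)\leqslant T$ is exactly the trigger condition of Claim~\ref{cl:underq}, so it yields $\wcr(\rho)\geqslant (1-\alpha)T-qp\log q$ directly, matching the first term of the minimum.

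In the second case, $\hat\rho(q,p)>T$. The set $\{(q',p'):q'\in[q,1],\ p'\in[0,\bar v],\ q'p'\leqslant qp\}$ is compact, and $\rho$ is upper-semicontinuous, so the supremum defining $\hat\rho(q,p)$ is attained at some $(q^*,p^*)$ with $q^*p^*\leqslant qp$. Applying Claim~\ref{cl:lb-over} at $(q^*,p^*)$ and invoking $q^*p^*\leqslant qp$ gives
\[\wcr(\rho)\geqslant \rho(q^*,p^*)-q^*p^* \;=\; \hat\rho(q,p)-q^*p^* \;\geqslant\; \hat\rho(q,p)-qp \;>\; T-qp,\]
which matches the second term of the minimum. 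Combining the two cases, $\wcr(\rho)$ dominates whichever of the two expressions is smaller, and the desired inequality follows.

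There is really no substantive obstacle here: the whole argument is the observation that either the ``underproduction'' trigger in Claim~\ref{cl:underq} fires (giving the first bound), or it fails, which by definition of $\hat\rho$ exhibits a quantity--price pair on which $\rho$ pays a subsidy exceeding $T-qp$, letting Claim~\ref{cl:lb-over} deliver the second bound. The only mild technicality is the attainment of the maximum in $\hat\rho(q,p)$; if one wished to sidestep upper semicontinuity, one could instead take an $\varepsilon$-maximizing sequence, apply Claim~\ref{cl:lb-over} to each term, and let $\varepsilon\to 0$ to reach the same conclusion.
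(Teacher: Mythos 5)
Your proof is correct and follows essentially the same route as the paper's: a dichotomy on whether $\hat\rho(q,p)$ exceeds $\bar\rho(\underbar q)+(q-\underbar q)k_\alpha$, invoking Claim~\ref{cl:underq} in one case and Claim~\ref{cl:lb-over} at a maximizer of $\hat\rho(q,p)$ in the other. The only difference is that you spell out the attainment of the maximum via upper semicontinuity (and offer an $\varepsilon$-maximizing fallback), which the paper leaves implicit.
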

\begin{proof}
If $\hat\rho(q,p)\geqslant \bar\rho(\underbar q) + (q-\underbar q)k_\alpha$ then let $q',p'$ be such that $q' p'\leqslant q p $, $q'\geqslant q$ and $\rho(q',p')=\hat\rho(q,p)$. By Claim~\ref{cl:lb-over}
\[\wcr(\rho) \geqslant \rho(q',p') - p'q'\geqslant\bar\rho(\underbar q)+(q-\underbar q) k_\alpha- qp.\]

If $\hat\rho(q,p)<\bar\rho(\underbar q) + (q-\underbar q)k_\alpha$ then $\wcr(\rho)\geqslant  (1-\alpha)(\bar\rho(\underbar q)+(q-\underbar q)k_\alpha) - qp\log q$  by  Claim~\ref{cl:underq}.\end{proof}
\subsection{Proof of Theorem~\ref{th:lower}}
We need to show that $\wcr(\rho)\geqslant \min\left( (1-\alpha)q  k_\alpha  - qp\log q, q (k_\alpha-p) \right)$ for every $q,p$. This follows from Claim~\ref{cl:lb-withunderq} with $\underbar q = 0$. 
\begin{remark}\label{rm:fixedcost}
The proof of Claim~\ref{cl:lb-withunderq} for the  case of $\underbar q=0$ relies only on fixed cost functions. So does the proof of  Theorem~\ref{th:lower}. \end{remark}
\subsection{Upper bound on worst-case regret}
We consider a policy of the form \begin{equation}\label{r-k-s}\rho(q,p)=\min(qk, qp+s).\end{equation}
We bound the regret from~\eqref{r-k-s} separately for the case of overproduction and the case of underproduction. 
\begin{claim}\label{cl:over}The regret from overproduction under~\eqref{r-k-s} is at most
  \[\max\left((1-\alpha)k,s\right).\]\end{claim}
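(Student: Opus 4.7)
The plan is to fix a scenario $(V, C, q, p)$ in which $(q,p)$ is the firm's best response and overproduction occurs (so some socially optimal $q^{*}$ satisfies $q^{*}\leqslant q$), then bound the regret by case analysis on which branch of the minimum defines $\rho(q,p)$. The key input I would exploit throughout is the firm's incentive compatibility against the deviation $(q^{*},p)$, which is feasible because $V$ is decreasing and $p\leqslant V(q)\leqslant V(q^{*})$; it yields
\[C(q)-C(q^{*})\leqslant \rho(q,p)-\rho(q^{*},p).\]
Combined with $\int_{q^{*}}^{q}V(z)\,\td z\geqslant (q-q^{*})p$ (also from the monotonicity of $V$), this already gives $\dstr\leqslant s$ directly. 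The harder task is jointly bounding $\dstr+(1-\alpha)\fp$.

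In Case 1 ($\rho(q,p)=qk$, equivalently $q(k-p)\leqslant s$), I would use that $q^{*}(k-p)\leqslant s$ as well, so $\rho(q^{*},p)=q^{*}k$ and firm IC sharpens to $C(q)-C(q^{*})\leqslant (q-q^{*})k$. After substituting $\fp=qk-C(q)$, using $C(q^{*})\geqslant 0$, and maximizing the regret over $C(q)-C(q^{*})$ at its IC upper bound, the estimate becomes
\[\rgrt\leqslant (q-q^{*})(\alpha k-p)+(1-\alpha)qk.\]
If $p\geqslant \alpha k$ the first summand is nonpositive, giving $\rgrt\leqslant (1-\alpha)k$; if $p<\alpha k$, the expression is decreasing in $q^{*}$ and is maximized at $q^{*}=0$, where the two summands collapse to $q(k-p)\leqslant s$.

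In Case 2 ($\rho(q,p)=qp+s$, so $q(k-p)\geqslant s$ and hence $p\leqslant k-s$ since $q\leqslant 1$), I would split further on the branch of $\rho$ active at $(q^{*},p)$. If $\rho(q^{*},p)=q^{*}p+s$, firm IC tightens to $C(q)-C(q^{*})\leqslant (q-q^{*})p$, which combined with $\int V\geqslant (q-q^{*})p$ forces $\dstr\leqslant 0$, so $\rgrt\leqslant (1-\alpha)\fp\leqslant (1-\alpha)qk\leqslant (1-\alpha)k$. If instead $\rho(q^{*},p)=q^{*}k$ (which requires $q^{*}\leqslant s/(k-p)$), firm IC becomes $C(q)-C(q^{*})\leqslant qp+s-q^{*}k$, and the analogous calculation simplifies, after the $qp$ and $s$ terms combine, to
\[\rgrt\leqslant s+q^{*}(p-\alpha k)-(1-\alpha)C(q^{*})\leqslant s+q^{*}(p-\alpha k).\]
If $p\leqslant \alpha k$ this is at most $s$; if $p>\alpha k$, plugging in $q^{*}\leqslant s/(k-p)$ and using $k-p\geqslant s$ gives $\rgrt\leqslant s(1-\alpha)k/(k-p)\leqslant (1-\alpha)k$.

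The main obstacle I expect is the bookkeeping in the final sub-case of Case 2: one must notice that $q(k-p)\geqslant s$ combined with $q\leqslant 1$ forces $p\leqslant k-s$, and hence $s/(k-p)\leqslant 1$. Without this tightening the ratio $(1-\alpha)k/(k-p)$ could exceed $1$ and the desired bound would fail to close.
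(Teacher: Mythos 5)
Your proof is correct. It starts from the same two ingredients as the paper's: incentive compatibility against the deviation to the socially optimal quantity $q^\ast$ at the price $p$ (feasible since $p\leqslant V(q)\leqslant V(q^\ast)$), and the estimate $\int_{q^\ast}^{q}V(z)\,\td z\geqslant (q-q^\ast)p$. The difference lies in how the bound is closed. The paper avoids all case analysis by using the structural fact that $q\mapsto\rho(q,p)/q$ is decreasing, so that $\rho(q,p)-\rho(q^\ast,p)\leqslant(q-q^\ast)\rho(q,p)/q$; this leads in a single chain to $\rgrt\leqslant(1-\alpha)q^\ast k+(1-q^\ast)s$, a convex combination of $(1-\alpha)k$ and $s$, hence at most their maximum. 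You instead split on which branch of the minimum is active at $(q,p)$ and at $(q^\ast,p)$, with a further split on the sign of $p-\alpha k$; all four sub-cases check out, including the key point in the last one that $q(k-p)\geqslant s$ together with $q\leqslant 1$ forces $s/(k-p)\leqslant 1$. Your route is more elementary but longer; the paper's makes transparent where $\max\left((1-\alpha)k,s\right)$ comes from, namely the endpoints $q^\ast=1$ and $q^\ast=0$ of the convex combination. One minor imprecision that does not affect the argument: your parenthetical claim that IC plus the value bound ``already gives $\dstr\leqslant s$ directly'' relies on $\rho(q^\ast,p)\geqslant q^\ast p$, which requires $p\leqslant k$; if the firm posts $p>k$ (it is taxed down to $qk$ but may still do so), that shortcut fails, although this situation falls into your Case 1, where $q(k-p)\leqslant 0\leqslant s$ and $p\geqslant\alpha k$, and is handled correctly there.
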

\begin{proof} 
Let $q^\ast$ be a socially optimal quantity, let $p^\ast=V(q^\ast)$, and assume that the firm chooses $(q,{p})$ with $q \geqslant q^\ast$ and ${p}\leqslant V(q) \leqslant p^\ast$.
    Let $\bar c = C(q)-C(q^\ast)$. Then
    \begin{equation}\label{dstrover}\dstr = \bar c-\int_{q^\ast}^{q}V(z)~\td z\leqslant \bar c - ({q}-q^\ast){p},\end{equation} and
\begin{equation}\label{barcover}\bar c\leqslant \rho({q},{p})-\rho(q^\ast,p^\ast)\end{equation}
since $({q},{p})$ is a best response.
Therefore
\begin{multline*}\rgrt = (1-\alpha)\fp+\dstr \leqslant
  (1-\alpha)(\rho(q,{p})-\bar c) + \bar c - ({q}-{q}^\ast){p} \leqslant \\ (1-\alpha)\rho(q^\ast,p^\ast) + \rho(q,p)-\rho(q^\ast,p^\ast)-(q-q^\ast)p  \leqslant  (1-\alpha)\rho(q^\ast,p^\ast) + \rho(q,p)-\rho(q^\ast,p)-(q-q^\ast)p \leqslant \\
  (1-\alpha)\rho(q^\ast,p^\ast) + (q - q^\ast) (\rho(q,p) /q - p)   \leqslant   (1-\alpha) q^\ast k + (1 - q^\ast/q) s \leqslant \\ (1-\alpha) q^\ast k + (1 - q^\ast)s\leqslant \max((1-\alpha)k, s).\end{multline*}
where the first inequality follows from~\eqref{fpcs}, the fact that $\bar c\leqslant C({q})$ and~\eqref{dstrover}; the second inequality follows from~\eqref{barcover}; the third inequality follows from the fact that $p\leqslant p^\ast$ and the fact that $p\mapsto \rho(q,p)$ is monotone increasing; in the fourth inequality, $\rho(q,p)-\rho(q^\ast,p) \leqslant (q-q^\ast) \rho(q,p) /q$ because $q^\ast\leqslant q$ and $q\mapsto \rho(q,p) /q$ is decreasing; the fifth inequality follows from $\rho(q^\ast,p^\ast)\leqslant (1-\alpha)  q^\ast k$,  $\rho(q,p)\leqslant qp+s$, and $q\leqslant 1$.

\end{proof}

\begin{claim}\label{cl:under}The regret from underproduction under~\eqref{r-k-s} is at most
\[\max_q  \; q \max((1-\alpha)k, \bar v - k) - q\left(k-\frac{s}{q}\right) \log q.\] 
\end{claim}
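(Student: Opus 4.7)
The plan is to derive an upper bound on the regret from underproduction for each fixed firm best response $(q,p)$, and then take the maximum over $q$. Throughout, let $R:=\rho(q,p)=\min(qk,qp+s)$, $\fp=R-C(q)$, and let $q^\ast$ be a socially optimal quantity with $q<q^\ast$ (underproduction); write $c^\ast:=C(q^\ast)-C(q)\geqslant 0$, so that $\dstr=\int_q^{q^\ast}V(z)\,\td z-c^\ast$.

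The first step is to extract the analogue of the key pointwise inequality used in the proof of Lemma~\ref{le:thelemma}. The firm's best-response inequality $\rho(z,V(z))-C(z)\leqslant \fp$ for $z\geqslant q$, combined with $\rho(z,V(z))=\min(zk,zV(z)+s)$, implies that for each such $z$ at least one of the following holds: (a) $V(z)\leqslant (R-s+C(z)-C(q))/z$ (the ``subsidy regime'' at $z$, where $V(z)+s/z\leqslant k$), or (b) $C(z)\geqslant zk-\fp$ (the ``cap regime'' at $z$, where $V(z)+s/z>k$). Since $V$ is decreasing in $z$ and $k-s/z$ is increasing in $z$, there is a threshold $z^\ast\geqslant q$ with regime~(b) on $(q,z^\ast]$ and regime~(a) on $(z^\ast,q^\ast]$; applying (b) at $z^\ast$ and using $C(z^\ast)\leqslant C(q^\ast)$ gives $z^\ast\leqslant (R+c^\ast)/k$.

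The second step is to split $\int_q^{q^\ast}V$ at $z^\ast$. On the cap region $(q,z^\ast]$ we use only $V\leqslant \bar v$, contributing at most $(z^\ast-q)\bar v$; on the subsidy region $(z^\ast,q^\ast]$ we use $V(z)\leqslant (R-s+c^\ast)/z$ together with $V\leqslant \bar v$ and mimic the integration trick from Lemma~\ref{le:thelemma}, splitting at the crossover $q'':=(R-s+c^\ast)/\bar v$ to obtain a constant piece plus a $1/z$ piece that integrates to a $-\log$ term. After using $R\leqslant qk$ and $z^\ast k\leqslant R+c^\ast$ to eliminate the auxiliary quantities $z^\ast$, $c^\ast$, and $q''$, the distortion bound simplifies to an expression of the form $\dstr\leqslant q(\bar v-k) + \varphi(q)(k-s/q)$ when the cap region is present and $\dstr\leqslant \varphi(q)(k-s/q)+(R-qk)$ otherwise, with $\varphi$ as in Lemma~\ref{le:thelemma}.

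Finally, adding $(1-\alpha)\fp\leqslant (1-\alpha)qk$ and observing that the two linear-in-$q$ contributions can be merged into $q\max((1-\alpha)k,\bar v-k)$, while $\varphi(q)(k-s/q)=-q(k-s/q)\log q$ for $q\geqslant 1/e$ (and a parallel bound holds for $q\leqslant 1/e$ using $\varphi(q)=1/e$), yields
\[\rgrt\leqslant q\max\!\bigl((1-\alpha)k,\bar v-k\bigr) - q\bigl(k-\tfrac{s}{q}\bigr)\log q\]
for the firm's actual $q$. Taking the maximum over $q$ gives the claim. The main obstacle is carrying out the integration and elimination of $z^\ast$ and $c^\ast$ cleanly; this is handled by using monotonicity of the intermediate bound in $c^\ast$ so that the worst case is attained at the extremal value $z^\ast k=R+c^\ast$, which in turn lets us write everything in terms of $q$ alone.
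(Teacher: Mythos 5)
Your overall architecture matches the paper's: you split $[q,q^\ast]$ at the threshold where the per-unit subsidy stops reaching the cap (your $z^\ast$, the paper's $\bar q$), use the firm's best-response inequalities to bound $V$ pointwise in each regime, and run the $-q\log q$ integration trick with the majorant $\varphi$ (you re-derive Lemma~\ref{le:thelemma} inline rather than invoking it, which is legitimate). The gap is in the final bookkeeping: you assert the bound at the firm's \emph{actual} quantity $q$, writing the cap-region contribution as $q(\bar v-k)$ and the log term as $\varphi(q)(k-s/q)$. Both are evaluated at the wrong point. The cap region is $(q,z^\ast]$, so its contribution is $(z^\ast-q)(\bar v-k)$, which can far exceed $q(\bar v-k)$ when $z^\ast\gg q$; and for small $q$ with $s>0$ your term $\varphi(q)(k-s/q)$ is negative, so it cannot majorize a positive distortion. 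The correct statement --- and the one the paper proves --- is that $\rgrt$ is bounded by the displayed expression evaluated at $q'=\max(z^\ast,1/e)\geqslant q$, i.e.\ at (a majorant of) the threshold rather than at the firm's quantity; the claim then follows only because it takes a maximum over all $q\in[0,1]$.

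To see that your pointwise-in-$q$ conclusion is false, take $s=0$, $V\equiv\bar v$, $C(z)=0$ for $z\leqslant q_0$ and $C(z)=(1-q_0)k+\delta$ for $z>q_0$, with $q_0$ and $\delta$ tiny. The firm's best response is to sell $q_0$ units, $q^\ast=1$, and the regret is $(1-\alpha)q_0k+(1-q_0)(\bar v-k)-\delta\approx \bar v-k>0$, while $q_0\max((1-\alpha)k,\bar v-k)-q_0k\log q_0\to 0$. (Here $s=0$ is in scope: the claim is stated for general $k,s$, and $s_\alpha=0$ at $\alpha=0$.) A second, related issue: on the subsidy region you cap $V$ by $\bar v$ and place the crossover at $(R-s+c^\ast)/\bar v$, which makes the coefficient of the resulting log term come out as $\bar v$; the paper instead uses $\sup_{z>\bar q}V(z)\leqslant k-s/\bar q$ as the flat bound, and it is exactly this choice that produces the factor $k-s/q'$ appearing in the target. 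Both defects are repairable by carrying $z^\ast$ through to the end as the paper does with $\bar q$, but as written your derivation does not establish the claim.
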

\begin{proof}
  Let $q^\ast $ be a socially optimal quantity and assume that the firm chooses $({q},{p})$ with ${q}\leqslant q^\ast$. 

  If $q^\ast(k - V(q^\ast))\leqslant s$ then $\rho(q^\ast,V(q^\ast))= q^\ast k$ and $\rho(q,p)=q k $. Therefore, since the firm prefers to produce $q$ over $q^\ast$ it follows that $C(q^\ast)-C(q)\geqslant (q^\ast-q)k$ which implies that $\dstr\leqslant (q^\ast - q)(\bar v - k)$ and
  \[\rgrt\leqslant (1-\alpha)\rho(q,p)+\dstr\leqslant (1-\alpha)qk + (q^\ast - q)(\bar v - k)\leqslant \max((1-\alpha)k,\bar v - k).\]

  If $q^\ast(k - V(q^\ast)) > s$ then let $\bar q\in [q,q^\ast)$ be such that $z(k - V(z))\leqslant s$ for $q< z < \bar q$ and $z(k-V(z))> s$ for $z>\bar q$. ($\bar q$ is the point at which the subsidy is used up, except that if it was already used up before $q$ then $\bar q=q$). Let $\bar p=k - s/\bar q$. Then it follows from the definition of $\bar q$ that $\bar p\geqslant \sup_{z > \bar q}V(z)$. 

By Lemma~\ref{le:thelemma} there exists some $z^\ast\in [\bar q, q^\ast]$ such that 
\[\int_{\bar q}^{q^\ast} V(z)~\td z - (C(q^\ast)-C(\bar q)) \leqslant z^\ast \underline{p}- \bar q \bar p -(C(z^\ast)- C(\bar q))+ \varphi(\bar q) \bar p,\]
with $\underline{p}=\min(\bar p, V(z^\ast))$. Since $z^\ast\geqslant \bar q$ and $\underline{p}\leqslant \bar p$ it follows from the definition of $\rho$ that
\[\rho\left(z^\ast, \underline{p}\right) - z^\ast \underline{p}  \geqslant \rho(\bar q,\bar p)-\bar q \bar p= \bar q(k - \bar p) \Longrightarrow  z^\ast \underline{p}  -\rho\left(z^\ast, \underline{p}\right)\leqslant \bar q( \bar p-k)  .\]
Since the firm prefers to produce ${q}$ over $z^\ast$ it follows that
\begin{equation*}
\rho\left(z^\ast, \underline{p}\right)\leqslant \rho({q},{p}) + (C(z^\ast)-C({q})).
\end{equation*}
The last three inequalities and $C(\bar q)\leqslant C(q^\ast)$ imply
\begin{equation}\label{barqqast}
\int_{\bar q}^{q^\ast} V(z)~\td z  \leqslant C(q^*)-C(q) + \rho (q,p)-\bar q k +\varphi(\bar q) \bar p .  
\end{equation}
Therefore
\begin{multline*}\dstr=\int_{{q}}^{q^\ast} V(z) ~\td z - (C(q^\ast)-C({q}))=\int_{{q}}^{\bar q} V(z) ~\td z+\int_{\bar q }^{q^\ast} V(z) ~\td z - (C(q^\ast)-C({q})) \\\leqslant  (\bar q - {q})\bar v - \bar q k + \rho({q},{p}) + \varphi(\bar q) \bar p\leqslant (\bar q  - {q})(\bar v - k) + \varphi(\bar q)\bar p,\end{multline*}
where the first inequality follows from~\eqref{barqqast} and $V(z)\leqslant \bar v$, and the second from $\rho({q},{p})\leqslant q k$.  It follows that 
\begin{multline*}\rgrt\leqslant (1-\alpha)\rho({q},p)+\dstr \leqslant (1-\alpha)qk + (\bar q - {q})(\bar v - k) + \varphi(\bar q) \bar p \leqslant \\
q' \max((1-\alpha)k, \bar v - k) - q'\bar p \log q' \leqslant q' \max((1-\alpha)k, \bar v - k) - q'(k - s/q') \log q' ,
\end{multline*}
for some $\bar q\leqslant q' < 1$ such that $\varphi(\bar q)=-q'\log q'$. Here the last inequality follows from the fact that $\bar p \leqslant k - s/\bar q \leqslant k - s/q'$.
\end{proof}

\subsection{Proof of Theorem~\ref{th:upper}}
Note first that from~\eqref{eq:regretlower} with $q=1$ and $p=0$ we get $r_\alpha\geqslant (1-\alpha)k_\alpha=\bar v-k_\alpha$.

Consider the policy~\eqref{r-k-s} with $k=k_\alpha$ and $s_\alpha \leqslant s\leqslant r_\alpha$.

Since $s\leqslant r_\alpha$ and $\bar v -k_\alpha = (1-\alpha) k_\alpha \leqslant r_\alpha$ it follows from Claim~\ref{cl:over} that the regret from overproduction is at most $r_\alpha$.

By Claim~\ref{cl:under} to prove that the regret from underproduction is at most $r_\alpha$ it is sufficient to prove that $(1-\alpha)q k_\alpha - q(k_\alpha-s/q) \log q\leqslant r_\alpha$ for every $q\in [0,1]$. For $q=1$ this follows from the fact that $(1-\alpha)k_\alpha\leqslant r_\alpha$. Let $q<1$ and let $p=k_\alpha-s/q$. Let $q'>q$. Then $q'(k_\alpha - p) > s$ and by the assumption on $s$ this implies that $(1-\alpha)q'k_\alpha - q'p\log q' \leqslant r_\alpha$. Since this is true for every $q'>q$ it follows by continuity that $(1-\alpha)qk_\alpha - qp\log q \leqslant r_\alpha$, as desired.
\subsection{Proof of Theorem~\ref{th:properties}}
Let $(q_\alpha,p_\alpha)$ achieve the maximum in the definition of $r_\alpha$ in~\eqref{eq:regretlower}. 
\begin{enumerate}
\item Assume that $\rho(\underbar q,p)> \underbar q  k_\alpha$ for some $\underbar q\leqslant q_\alpha$ and some $p$. Then $\bar\rho(\underbar q)>\underbar q k_\alpha$ and therefore $\bar\rho(\underbar q)+(q_\alpha-\underbar q)k_\alpha> q_\alpha k_\alpha$. Therefore, by Claim~\ref{cl:lb-withunderq} with $q=q_\alpha$ and $p=p_\alpha$,
  \begin{multline*}\wcr(\rho)\geqslant \min\bigl((1-\alpha)(\bar\rho(\underbar q)+(q_\alpha-\underbar q)k_\alpha) - q_\alpha p_\alpha  \log q_\alpha, \bar\rho(\underbar q) + (q_\alpha-\underbar q)k_\alpha- q_\alpha p_\alpha\bigr)\\> \min((1-\alpha) q_\alpha k_\alpha-q_\alpha p_\alpha \log q_\alpha, q_\alpha (k_\alpha - p_\alpha))=r_\alpha.\end{multline*}
\item Suppose that $\rho(q,p)<qp +s_\alpha$ for every $q,p$. Then there exists some $q\in [0,1], p\in [0,k_\alpha]$ such that $(1-\alpha)q k_\alpha - qp\log q> r_\alpha$ and $q(k_\alpha-p)>\max_{q',p'} (\rho(q',p')-p'q')\geqslant \hat\rho(q,p)-qp$, implying $\hat\rho(q,p)<q k_\alpha $. By Claim~\ref{cl:underq} with $\underbar q=0$ we get that $\wcr(\rho) > r_\alpha$.
  \item Suppose that $\rho(q,p)>qp+r_\alpha$ for some $q,p$. Then $\wcr(\rho)>r_\alpha$ by  Claim~\ref{cl:lb-over} 
\end{enumerate}

\nocite{*}
\bibliographystyle{aea}
\bibliography{cite2}

\end{document}